\documentclass [12pt]{article}
\usepackage[numbers]{natbib}
\usepackage{graphicx}
\usepackage[small]{subfigure,epsfig}
\usepackage{indentfirst}
\usepackage{amsmath,latexsym,enumerate}
\usepackage{amssymb}
\usepackage{amsfonts}
\usepackage{array,longtable,lscape}
\usepackage{amsthm}

\usepackage{chngcntr}
\usepackage{apptools}
\usepackage{color}
\AtAppendix{\counterwithin{lemma}{section}}
\usepackage{authblk}

\theoremstyle{plain} 

\newtheorem{proposition}{Proposition}
\newtheorem{remark}{Remark}

\numberwithin{equation}{section} \numberwithin{lemma}{section} \numberwithin{theorem}{section} \numberwithin{proposition}{section} \numberwithin{corollary}{section}
\numberwithin{remark}{section}

\usepackage{geometry} 
\begin{document}

\title{Integrability of oscillators and transcendental invariant curves}

\author[1]{Jaume Gin\'e}

\author[2,3]{Dmitry I. Sinelshchikov}

\affil[1]{Departament de Matem\`atica, Universitat de Lleida, Avda. Jaume II, 69; 25001 Lleida, Catalonia, Spain}
\affil[2]{Instituto Biofisika (UPV/EHU, CSIC), University of the Basque Country, Leioa E-48940, Spain}
\affil[3]{Ikerbasque Foundation, Bilbao 48013, Spain}

\maketitle

\begin{abstract}
In this work we study the integrability of a family of nonlinear oscillators. Dynamical systems from this family appear in different applications from mechanics to chemistry. We propose an approach for finding first integrals and integrating factors, which is based on the construction and classification of transcendental invariant curves whose cofactors are polynomial or rational in one of the variables. We demonstrate that this approach can be efficiently used for finding non-Liouvillian and non-Puiseux integrable dynamical systems. Its application involves finding solutions only of linear algebraic and linear ordinary differential equations. This allows one to study singularities, including essential ones, of the invariant curves in the complex plane. We illustrate this approach by proving non-Liouvillian integrability of two dynamical systems from the Painlev\'e--Gambier classification and non-Puiseux integrability of an oscillator from the considered family. Furthermore, we construct equivalence classes of the first two dynamical systems with respect to nonlocal transformations. We show that among these equivalence classes there are interesting examples of integrable dynamical systems.
\end{abstract}

\section{Introduction}
Autonomous nonlinear oscillators form an important subclass of two-dimensional dynamical systems, as they are ubiquitous in different applications from physics and mechanics to biology and chemistry \cite{Andronov,Murray,Izhikevich,Jenkins}. In addition, some general two-dimensional dynamical systems, like the Lotka-Volterra system and its variants, can be transformed into nonlinear oscillators, such as the Li\'enard equations and their generalizations (see, e.g. \cite{Ghosh2013,Saha2017,Sinelshchikov2023a}). Consequently, understanding of the behavior of the  trajectories of nonlinear oscillators is an important problem. One of the possible solutions to this problem consists in establishing their integrability. Recall that an autonomous two-dimensional dynamical system or an autonomous nonlinear oscillator is called integrable when there exists a global autonomous first integral (see, e.g \cite{Llibre_book,Zhang}). Notice that throughout this work under a global first integral we understand a smooth function that is defined on whole domain of definition of a given dynamical system except, perhaps, for a set of zero Lebesgue measure.

Typically, the integrability of polynomial dynamical systems is considered in the relation to the existence of algebraic invariant curves and exponential factors (see, \cite{Llibre_book,Libre2010,Zhang} and references therein). This may lead only to the Liouvillian integrability, i.e. to the global first integrals that are Liouvillian functions. Let us recall that, roughly speaking, Liovillian functions are formed by algebraic operations and primitives of the elementary functions (for the complete definition of the set of Liouvillian functions see, \cite{Zhang,Kovacic1986}). For instance, Liouvillian integrability for several families of nonlinear oscillators has been studied in \cite{Gin,GV,GV1,GSi,Sinelshchikov2020,Sinelshchikov2021,Sinelshchikov2023a}.

On the other hand, it is known (see e.g. \cite{Gine2003,Gine2004,Gine2011,Gine2010,Demina2022}) that polynomial dynamical systems can be integrable in  non-Liouvillian sense.  That is they do not possess a Liouvillian global first integral, but there is a global first integral that is a non-Liouvillian function. Integrability of such dynamical system can be explained with the help of more general theories like the theory of generalized cofactors or the Weierstrass and Puiseux integrability \cite{Gine2003,Gine2010,Demina2022}, which includes Liouvillian integrability. Furthermore, there are non-polynomial, e.g. rational, nonlinear oscillators that appear in various applications {see for instance \cite{GV2}}. Therefore, it is interesting to consider the extension of integrability theories to not-necessarily polynomial nonlinear oscillators with non-Liouvillian first integrals.

It is important to remark that, when we deal with integrability theories that are connected to a certain functional class of a first integral or integrating factor, such theories are generally not invariant under an arbitrary change of coordinates. In other words, if one makes a change of variables, the new system may not be integrable according to the same theory of integrability. For instance, if we apply an arbitrary analytical change of variables to a Liouvillian integrable system, we can obtain a non-Liouvillian integrable system. The same can happen for any theory of integrability that is connected to a functional class of a first integral or integrating factor. In particular, below we show that algebraic transformations can map a Puiseux integrable system to a non-Puiseux integrable one.

Another interesting approach to integrability of polynomial dynamics systems is based on the applications of the differential Galois theory (see, \cite{Li2017,Szuminski2018,Szuminski2020,Li2020}). This theory requires the existence $n-k$ pairwise commuting generalized symmetries and $k$ common first integrals for integrability of an $n$ dimensional dynamical system ($n\in\mathbb{N},\, n>1,\,k\in\mathbb{N}\cup\{0\},\, n-k\geq0$). This type of integrability is call B(Bogoyavlenski)-integrability. However, one can show for any two dimensional dynamical system B--integrability is equivalent to integrability defined above.


In this work we use a combination of the methods discussed above to study the integrability of nonlinear oscillators. We propose an algorithm for the construction and classification of transcendental invariant curves, that are polynomials with respect to one of the variables, which we call the privileged variable. We assume that the cofactors of these invariant curves are rational or polynomial functions with respect to the privileged variable. If one is only interested in establishing the integrability of a dynamical system, one needs to start with zero degree, with respect to the privileged variable, invariant curves and consequently increase their order until one of the integrability correlations on the cofactors presented below is satisfied. On the other hand, if it is necessary to obtain classification of invariant curves, we can find an upper bound on the degree of an invariant curve with respect to the privileged variable using the technique from \cite{Demina2022}. We demonstrate that this can be used for proving that a studied dynamical system is not Liouvillian or Puiseux integrable. It is also worth noting that the application of the proposed algorithm involves finding solutions of only \textit{linear} algebraic and differential correlation on the coefficients of an invariant curve as a polynomial in the privileged variable. Therefore, working with linear equations often allows one to find explicit expressions for the transcendental invariant curves and the corresponding first integrals and integrating factors. Moreover, even if we do not know the explicit solutions of these linear ordinary differential equations, we can determine and classify singular points of their solutions and, hence, of the invariant curves and ultimately of the integrating factor and first integral.

We consider the following family of oscillators
\begin{equation}\label{eq:Lienard}
  x_{tt}+h(x)x_{t}^{2}+f(x)x_{t}+g(x)=0,
\end{equation}
where $h$, $f$, $g$ are arbitrary sufficiently smooth functions such that $f g \neq0$. Family \eqref{eq:Lienard} generalize the Li\'enard equation ($h=0$) and contains a lot of oscillators with different applications (see, e.g. \cite{Ghosh2013,Saha2017,Kazakov,Polyanin}).

We study integrability of two oscillators from \eqref{eq:Lienard}, one rational and one polynomial, that belong to the Painlev\'e--Gambier classification \cite{Ince}. Namely, we consider a rational oscillator
\begin{equation}\label{eq:Ince_XV}
  w_{\tau\tau}-\frac{w_{\tau}^{2}}{w}+\frac{w_{\tau}}{w}+2\alpha^{3}w^{2}=0,
\end{equation}
where $\alpha\neq0$ is an arbitrary parameter and a polynomial one
\begin{equation}\label{eq:Ince_V}
 w_{\tau\tau}+(2w-1) w_{\tau}-w^{2}+\beta^{2}=0,
\end{equation}
where $\beta$ is an arbitrary parameter.

We also consider a rational oscillator
\begin{equation}\label{eq:ie_1a}
w_{tt}-\frac{2w-1}{w(w-1)}w_{t}^{2}+\frac{(b-2)w^{2}-(2a-1)w+a}{w(w-1)}w_{t}+(b-1)w-a=0,
\end{equation}
where $a$ and $b$ are real parameters, constrains on which will be introduced below.

Both equations \eqref{eq:Ince_XV} and \eqref{eq:Ince_V} are of the Painlev\'e type. However, for equation \eqref{eq:Ince_XV} Liouvillian/non-Liouvillian integrability has not been studied, while an autonomous first integral was found in \cite{Sinelshchikov2019}. Liouvillian integrable cases of equation \eqref{eq:Ince_V}, that correspond to certain values of the parameter $\beta$ were considered in \cite{Demina2020}. On the other hand, the integrability in the case of an arbitrary $\beta$ has not been studied. As far as oscillator \eqref{eq:ie_1a} is concerned, to the best of our knowledge, its Liouvillian/Puiseux integrability has not been studied.

Here we prove that the dynamical systems corresponding to \eqref{eq:Ince_XV} and \eqref{eq:Ince_V} are integrable, but not in the Liouvillian sense and the dynamical system corresponding to \eqref{eq:ie_1a} is not globally Puiseux integrable. Using proposed algorithm, we explicitly construct transcendental invariant curves of these dynamical systems that yield their first integrals and integrating factors.

In order to extend families of non-Liouvillian integrable oscillators from \eqref{eq:Lienard} we use the approach to the integrability of dynamical systems, which is based on the applications of nonlocal transformations \cite{Sinelshchikov2020,Sinelshchikov2020a,Sinelshchikov2021,Sinelshchikov2021a,Sinelshchikov2023,GSi}. This approach does not depend on the functional class of both a first integral and an integrating factor and is based on construction of equivalence classes for integrable nonlinear oscillators and dynamical systems. It is easy to demonstrate that family of equations \eqref{eq:Lienard} is closed with respect to nonlocal transformations \cite{Duarte1994,Meleshko2010,Sinelshchikov2020a,Sinelshchikov2023a}
\begin{equation}\label{eq:GS}
  w=F(x), \quad d\tau=G(x)dt.
\end{equation}
Consequently, the construction of equivalence classes for integrable members of \eqref{eq:Lienard} is a non-trivial problem. In this work we obtain such equivalent classes for oscillators  \eqref{eq:Ince_XV} and \eqref{eq:Ince_V}. Furthermore, transformations \eqref{eq:GS} preserve first integrals, integrating factors and invariant curves for members of \eqref{eq:Lienard} \cite{Sinelshchikov2021,Sinelshchikov2021a,Sinelshchikov2023a}. This will be used to prove non-Liouvillian integrability of constructed equivalence classes and to obtain explicit expressions for invariant curves, first integrals and integrating factors.

Let us remark that while we are unaware of possible applications of directly equations \eqref{eq:Ince_XV} and \eqref{eq:Ince_V}, their equivalence classes with respect to nonlocal transformations contains applied dynamical systems, for example, a particular case of the cubic Kolmohorov system.

The rest of this work is organized as follows. In the next section we provide basic definitions and correlations on the cofactors of transcendental invariant curves that lead to the existence of a first integral or integrating factor. There we also introduce an algorithm for establishing integrability through transcendental invariant curves. In Sections 3 and 4 we apply the proposed algorithm to study the integrability of \eqref{eq:Ince_XV} and \eqref{eq:Ince_V}, respectively. In Section 5 we demonstrate that the dynamical system that corresponds to \eqref{eq:ie_1a} is not globally Puiseux integrable and construct its first integral and integrating factor. We obtain equivalence classes for \eqref{eq:Ince_XV} and \eqref{eq:Ince_V} with respect to transformations \eqref{eq:GS} in Section 6.  We give several examples of applied dynamical systems from the constructed equivalence classes in Section 7. In the last Section we briefly summarize and discuss the results of this manuscript.

\section{Basic definitions and preliminary results}

Let us introduce some basic definitions concerning integrability of two-dimensional dynamical systems. Consider the differential system
\begin{equation}\label{eq:gen_sys}
  x_{t}=P(x,y), \quad y_{t}=Q(x,y).
\end{equation}
Throughout this manuscript we assume that the functions $P$ and $Q$ are smooth everywhere in $\mathbb{C}^{2}$ except, perhaps, for a set of zero Lebesgue measure. We also denote by $\mathcal{X}=P\partial_{x}+Q\partial_{y}$ the vector filed associated to \eqref{eq:gen_sys}.

It is clear that \eqref{eq:Lienard} can be converted into the equivalent dynamical system
\begin{equation}\label{eq:Lienard_sys}
  x_{t}=y, \quad y_{t}=-hy^{2}-fy-g.
\end{equation}
Therefore, all definitions and results presented below for \eqref{eq:gen_sys} can be applied to studying integrability of \eqref{eq:Lienard}.

It is said that a smooth globally non-constant function $I(x,y)$ is a first integral of \eqref{eq:gen_sys} if it satisfies the equation
\begin{equation}\label{eq:I_def}
\mathcal{X}I= PI_{x}+QI_{y}=0.
\end{equation}
Let us recall that under an integrable two-dimensional dynamical system in this work we understand a dynamical system that possesses a global autonomous first integral. Non-autonomous first integrals of \eqref{eq:gen_sys}, i.e. first integrals that explicitly depend on time, are called invariants and satisfy the equation $I_{t}+\mathcal{X}I=0$. For complete integrability of \eqref{eq:gen_sys} it is required to have two global invariants.

The existence of a first integral leads to the existence of an integrating factor $M(x,y)$ (and vice versa), which is a smooth function satisfying
\begin{equation}\label{eq:M_def}
 (PM)_{x}+(QM)_{y}=0.
\end{equation}
For some classes of integrable dynamical systems, e.g. Liouvillian integrable ones, integrating factors can have simpler structure than first integrals \cite{Llibre_book,Zhang}.

An invariant curve of \eqref{eq:gen_sys} is a smooth function $H$ that satisfies
\begin{equation}\label{eq:L_def}
 PH_{x}+QH_{y}=\lambda H,
\end{equation}
where $\lambda(x,y)$ is the cofactor.

For polynomial dynamical systems an important role play exponential factors, which are defined by
\begin{equation}\label{eq:E_def}
 \mathcal{X}{\rm e}^{\tilde{f}/\tilde{g}}=L {\rm e}^{\tilde{f}/\tilde{g}},
\end{equation}
where $\tilde{f},\,\tilde{g},\,L,P,Q \in \mathbb{C}[x,y]$, where $\mathbb{C}[x,y]$ denotes the ring of polynomials in $x$ and $y$ with coefficients in $\mathbb{C}$.

Let us recall that a two-dimensional dynamical system is Liouvillian integrable if there exists a smooth Liouvillian function that satisfies \eqref{eq:I_def} in the whole domain of definition of the vector field  $\mathcal{X}$  associated to \eqref{eq:gen_sys}, except, perhaps for a set of zero Lebesgue measure. Now we can formulate necessary and sufficient condition for Liouvillian integrability of a two dimensional polynomial dynamical system \cite{Singer1992,Christopher1998,Llibre_book,Zhang}. Such system is Liouvillian integrable if and only if it has an integrating factor that is a Darboux function, i.e. of the form
\begin{equation}\label{eq:M_Darboux}
 M={\rm e}^{\tilde{f}/\tilde{g}}\prod\limits_{i}\tilde{f}_{i}^{l_{i}},
\end{equation}
where $\tilde{f},\,\tilde{g},\,\tilde{f}_{i}\in\mathbb{C}[x,y]$ with $l_{i} \in \mathbb{C}$.

Finally, let us briefly recall the definition of the Puiseux integrability for polynomial systems of the form \eqref{eq:gen_sys} (for the details see \cite{Demina2022}). We call a function of the form
\begin{equation}\label{eq:Puiseux_P}
\sum\limits_{j=0}^{l}a_{j}(x)y^{j},
\end{equation}
where $a_{j}(x)$ are Puiseux series around a point $x_{0}\in\mathbb{C}\cup\{\infty\}$ and $j\in\mathbb{N}\cup\{0\}$, a Puiseux polynomial. If a polynomial dynamical system possesses an integrating factor
\begin{equation}\label{eq:Puiseux_IF}
M=\exp\left\{\frac{\widetilde{A}}{\widetilde{B}}\right\}\prod\limits_{j}\widetilde{C}_{j}^{\alpha_{j}},
\end{equation}
where $\widetilde{A}$, $\widetilde{B}$ and $\widetilde{C}_{j}$ are Puiseux polynomials and $\alpha_{j}\in\mathbb{C}$, it is called Puiseux integrable near $x=x_{0}$.

Let us remark that in \eqref{eq:Puiseux_P} we privilege variable $y$ over $x$. However one can interchange $x$ and $y$ privileging variable $x$ over $y$, which does not impact the definition of Puiseux integrability.

In this work we are interested in non-Liovillian and non-Puiseux integrable systems for which one cannot establish integrability using only algebraic invariant curves, exponential factors and local Puiseux series. Notice also that we do not require any specific form of an integrating factor or a first integral, like in the definition of the Weierstrass or Puisex integrability.

Thus, we need to consider transcendental invariant curves with possibly transcendental cofactors. A notion of a generalized (transcendental) invariant curve with a polynomial cofactor for polynomial systems was introduced in \cite{Gine2003} and generalized in \cite{Gine2004,Gine2013}. Similar results can be obtained for an arbitrary smooth system \eqref{eq:gen_sys} and non-polynomial cofactors. Let us note that for a formal invariant curve of \eqref{eq:gen_sys} there is a formal cofactor verifying that $\mathcal{X}H=\lambda H$ (see, \cite{Seidenberg1968,CGGL2003}).

As a consequence, we can connect transcendental first integrals and integrating factors of \eqref{eq:gen_sys} with transcendental invariant curves:

\begin{proposition}
\label{p:p5}
Suppose that \eqref{eq:gen_sys} has $k\in\mathbb{N}$ invariant curves $H_{i}$, $i=\overline{1,k}$ with the associated cofactors $\lambda_{i}$, $i=\overline{1,k}$. Assume also that $\Omega\subseteq\mathbb{C}^{2}$ is a domain where $P$, $Q$ and $H_{i}$, $i=\overline{1,k}$ are smooth and $\lambda_{i}$, $i=\overline{1,k}$ are defined, $\Omega\neq\varnothing$ and $q_{j}\in\mathbb{C}$, $j=\overline{1,k}$. Then two following statements hold:

\begin{enumerate}
  \item[(I)] The function $I=H_{1}^{q_{1}}\ldots H_{k}^{q_{k}}$ is a first integral of \eqref{eq:gen_sys} if and only if $\sum\limits_{j=1}^{k}q_{j}\lambda_{j}=0$.
  \item[(II)] The function $M=H_{1}^{q_{1}}\ldots H_{k}^{q_{k}}$ is an integrating factor of \eqref{eq:gen_sys} if and only if $\sum\limits_{j=1}^{k}q_{j}\lambda_{j}=-\mbox{div}\mathcal{X}$.
\end{enumerate}
\end{proposition}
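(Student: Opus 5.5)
The plan is a direct computation based on the logarithmic derivative of a product of powers. Work on $\Omega$, and wherever necessary pass to the open dense subset where every $H_i\neq0$; there we fix a local branch of each power $H_i^{q_i}$ and of $\log H_i$. Since the $q_i$ are complex, the powers are multivalued, so every identity below is meant for a chosen local branch. The differential relations do not depend on the branch, because two branches differ by a constant factor. The key identity is that for $F=H_1^{q_1}\cdots H_k^{q_k}$ the Leibniz rule, together with \eqref{eq:L_def}, gives
\begin{equation*}
\mathcal{X}F=F\sum_{j=1}^{k}q_j\frac{\mathcal{X}H_j}{H_j}=\Big(\sum_{j=1}^{k}q_j\lambda_j\Big)F .
\end{equation*}
In other words, $F$ is itself an invariant curve with cofactor $\sum_j q_j\lambda_j$. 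Equivalently, $\mathcal{X}\log F=\sum_j q_j\lambda_j$.

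For (I), compare this identity with the definition \eqref{eq:I_def}. If $\sum_j q_j\lambda_j=0$, then $\mathcal{X}I=0$ on $\Omega$, so $I$ is a first integral. Conversely, if $\mathcal{X}I=0$, then $(\sum_j q_j\lambda_j)I=0$. Since $I\neq0$ on the dense open set where all $H_i\neq0$, the sum vanishes there, and it vanishes on all of $\Omega$ wherever the $\lambda_j$ are continuous (or by the identity principle when they are analytic). Non-constancy of $I$ is part of the definition of a first integral and is assumed implicitly in the statement; if needed one should note that the equivalence is about the equation \eqref{eq:I_def}.

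For (II), expand \eqref{eq:M_def} as
\[
(PM)_x+(QM)_y=\mathcal{X}M+M\,\mbox{div}\mathcal{X},
\]
with $\mbox{div}\mathcal{X}=P_x+Q_y$. Substituting the key identity gives
\[
(PM)_x+(QM)_y=M\Big(\sum_j q_j\lambda_j+\mbox{div}\mathcal{X}\Big).
\]
This vanishes if and only if $\sum_j q_j\lambda_j=-\mbox{div}\mathcal{X}$, again using $M\neq0$ generically for the converse.

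The only delicate point is not the computation but its justification. One must deal with the zeros of the $H_i$, where $H_i^{q_i}$ may fail to be smooth or single-valued, and with the converse direction, where we divide by $F$. I would handle both by restricting to the complement of $\bigcup_i\{H_i=0\}$. That set has zero Lebesgue measure in $\Omega$ provided no $H_i$ vanishes identically on an open set, which we assume (otherwise that factor is trivial). This matches the paper's convention that objects need only be defined up to a set of measure zero. Continuity then extends the conclusions wherever the data are defined.
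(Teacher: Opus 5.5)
Your proof is correct and follows the paper's own argument: the paper only says to substitute $I$ and $M$ into \eqref{eq:I_def} and \eqref{eq:M_def}, and your identity $\mathcal{X}(H_1^{q_1}\cdots H_k^{q_k})=(\sum_j q_j\lambda_j)H_1^{q_1}\cdots H_k^{q_k}$ together with $(PM)_x+(QM)_y=\mathcal{X}M+M\,\mbox{div}\mathcal{X}$ is exactly that substitution written out, and your caveat that $I$ must be non-constant in (I) is a fair point the paper glosses over. One small correction: for merely smooth $H_i$, the zero set can have positive measure without containing an open set, so your measure-zero claim needs analyticity; your converse argument is unaffected, though, because it only uses density of $\{H_i\neq0\}$ and continuity of the $\lambda_j$.
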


\begin{proof}
  The proof is by direct substitution of the expressions for a first integral and an integrating factor into their respective definitions.
\end{proof}

\begin{remark}
Generalized exponential factors are included into Proposition \ref{p:p5} since any formal invariant curve $H$ can be presented in the form $\exp\{\tilde{H}\}$, where $\tilde{H}=\ln H$.
\end{remark}

On the basis of Proposition \ref{p:p5} we propose the following algorithm for establishing integrability of \eqref{eq:gen_sys}:
\begin{enumerate}
  \item We look for a change of variables that simplifies the considered dynamical system. Often, one can use an algebraic invariant curve of the considered system as a candidate for a suitable transformation (see, \cite{GS}). One can also rescale the time with the help of nonlocal transformations of the form $d\tau=r(x,y)dt$. This step is optional and one can directly proceed to the next one.

  \item We consider the simplified system, privilege one of the variables and look for invariant curves that are polynomials in this variable with the coefficients that are arbitrary sufficiently smooth functions. The maximal degree of the invariant curve with respect to the privileged variable can be estimated on the basis of the number of Puiseux series around $\infty$ for the projection of \eqref{eq:gen_sys} on the $(x,y)$ plane (see, \cite{Demina2022}). On the other hand, one can start with invariant curves of zero and first degree, with respect to the privileged variable, and proceed with the rest of the steps of the algorithm. If the integrability is not established with the help of these invariant curves, one should consequently increase the order of an invariant curve by one.

  \item Third, we make an assumption on the form of the cofactor. For example, if the simplified system is polynomial we assume that the cofactor is a polynomial in the privileged variable of the degree less by one than the degree of the corresponding vector field. For a rational vector field, we assume that cofactors are rational functions with respect to the privileged variable.

\item Fourth, we substitute the proposed form of an invariant curve and the respective cofactor into \eqref{eq:L_def} and collect terms at the same powers of the privileged variable. As a result, we obtain a system of \textit{linear} ordinary differential and algebraic equations for the coefficients of an invariant curve and its cofactor with respect to the privileged variable. Solving this system of equations we find, generally, a set of transcendental invariant curves of \eqref{eq:gen_sys} and corresponding cofactors. We also look for generalized invariants of \eqref{eq:gen_sys} that depend only on one of the variables. Notice that at this step generalized exponential factors can be also constructed.

\item Fifth, we apply Proposition \ref{p:p5} to the cofactors of constructed invariant curves and check whether either $I$ or $II$ holds. If either $I$ or $II$ from Proposition \ref{p:p5} holds then we obtain that considered dynamical system is integrable and we stop the algorithm.
\end{enumerate}

\begin{remark}
  The proposed algorithm can be used to prove the absence of the algebraic invariant curves of polynomial systems \eqref{eq:gen_sys}, if one obtains the maximal degree of an invariant curve with respect to the privileged variable at Step 2. Then, the classification of all invariant curves that are polynomial in one variable is possible. If these curves are not algebraic and do not degenerate into algebraic ones, we obtain that the corresponding dynamical system do not have algebraic invariant curves.
\end{remark}

Below we use Proposition \ref{p:p5} and the algorithm above for constructing first integrals and integrating factors for dynamical systems \eqref{eq:gen_sys} with transcendental invariant curves. Namely, in Propositions \ref{p:p1} and \ref{p:p2} we use the algorithm above to proof non-Liouvillian integrability of \eqref{eq:Ince_XV} and \eqref{eq:Ince_V} and classify their transcendental invariant curves. In Section 5 we provide an example to illustrate the application of the algorithm for the proof of non-Puiseux integrability.

\section{Integrability properties of \eqref{eq:Ince_XV}.}

The integrability properties of \eqref{eq:Ince_XV} can bs summarized in the following statement:

\begin{proposition}
\label{p:p1}
Equation \eqref{eq:Ince_XV} is not Liouvillian integrable, it possess a non-Liouvillian first integral
\begin{equation}\label{eq:Ince_XV_fi}
  I=H_{1}H_{2}^{-1},
\end{equation}
and an integrating factor
\begin{equation}\label{eq:Ince_XV_M}
 M=(H_{1}H_{2}H_{3})^{-1}.
\end{equation}
Here $H_{i}$, $i=1,2,3$ are invariant curves of \eqref{eq:Ince_XV} with their respective cofactors that are given by
\begin{equation}\label{eq:Ince_XV_IC}
\begin{gathered}
H_{1}=\mbox{Ai}^{\,'}\left\{\alpha w+\frac{(u-1)^{2}}{4\alpha^{2} w^{2}}\right\}+\frac{u-1}{2\alpha w} \mbox{Ai}\left\{\alpha w+\frac{(u-1)^{2}}{4\alpha^{2} w^{2}}\right\}, \quad \lambda_{1}=\frac{u-1}{2w},\\
H_{2}=\mbox{Bi}^{\,'}\left\{\alpha w+\frac{(u-1)^{2}}{4\alpha^{2} w^{2}}\right\}+\frac{u-1}{2\alpha w} \mbox{Bi}\left\{\alpha w+\frac{(u-1)^{2}}{4\alpha^{2} w^{2}}\right\}, \quad \lambda_{2}=\frac{u-1}{2w},\\
H_{3}=w, \quad \lambda_{3}=\frac{u}{w},
 \end{gathered}
\end{equation}
where $u=w_{\tau}$, $\mbox{Ai}$ and $\mbox{Bi}$ are the Airy functions and the prime denotes a derivative with respect to the argument of the corresponding function.

\end{proposition}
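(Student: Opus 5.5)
The plan has three parts. First, verify by direct computation that $H_1,H_2,H_3$ are invariant curves with the stated cofactors. Second, conclude the first integral and the integrating factor from Proposition \ref{p:p5}. Third, prove non-Liouvillian integrability by a birational reduction to the Airy equation.

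\textbf{Invariant curves.} Write \eqref{eq:Ince_XV} as the system
\[
w_\tau=u,\qquad u_\tau=\frac{u^{2}-u}{w}-2\alpha^{3}w^{2},
\]
with vector field $\mathcal{X}$. Introduce the auxiliary functions
\[
z=\alpha w+\frac{(u-1)^{2}}{4\alpha^{2}w^{2}},\qquad p=\frac{u-1}{2\alpha w}.
\]
A short computation should give $\mathcal{X}z=\alpha$. Indeed, the term $(u-1)u_\tau/(2\alpha^2w^2)$ produces $(u-1)^2u/(2\alpha^2w^3)-\alpha(u-1)$, and the first part cancels the derivative of $w^{-2}$. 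Similarly one expects $\mathcal{X}p=-\alpha^{2}w$.

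Hence, for $H=\mathrm{Ai}'(z)+p\,\mathrm{Ai}(z)$, we use $\mathrm{Ai}''(z)=z\,\mathrm{Ai}(z)$ to get
\[
\mathcal{X}H=\alpha z\,\mathrm{Ai}-\alpha^{2}w\,\mathrm{Ai}+\alpha p\,\mathrm{Ai}'
=\alpha p^{2}\mathrm{Ai}+\alpha p\,\mathrm{Ai}'=\alpha p\,H .
\]
Here we used $z-\alpha w=p^{2}$. So $\lambda_1=\alpha p=(u-1)/(2w)$. The identical computation with $\mathrm{Bi}$ gives $H_2$ and $\lambda_2$, and $\mathcal{X}w=u=(u/w)\,w$ gives $H_3$ and $\lambda_3$.

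\textbf{First integral and integrating factor.} Since $\lambda_1-\lambda_2=0$, part (I) of Proposition \ref{p:p5} yields that $I=H_1H_2^{-1}$ is a first integral. Next,
\[
\operatorname{div}\mathcal{X}=\partial_u u_\tau=\frac{2u-1}{w},\qquad \lambda_1+\lambda_2+\lambda_3=\frac{2u-1}{w}.
\]
So the choice $q_1=q_2=q_3=-1$ satisfies $\sum_j q_j\lambda_j=-\operatorname{div}\mathcal{X}$, and part (II) gives $M$. The Wronskian of $\mathrm{Ai}$ and $\mathrm{Bi}$ is nonzero, which should ensure that $I$ is non-constant.

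\textbf{Non-Liouvillian integrability.} This is the main obstacle. Multiplying by $w$ (a time rescaling) gives the polynomial system
\[
\dot w=wu,\qquad \dot u=u^{2}-u-2\alpha^{3}w^{3}.
\]
Liouvillian integrability is unaffected by this rescaling. The computation above shows that, in the variables $(z,p)$, the foliation becomes the Riccati equation
\[
\frac{dp}{dz}=\frac{\mathcal{X}p}{\mathcal{X}z}=-\alpha w=p^{2}-z .
\]
The change $(w,u)\mapsto(z,p)$ is birational, with inverse
\[
w=\frac{z-p^{2}}{\alpha},\qquad u=1+2\alpha wp .
\]
A birational map preserves Darboux integrating factors up to rational Jacobian factors, so it preserves Liouvillian integrability via the Singer criterion \eqref{eq:M_Darboux}. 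It therefore suffices to show that $p'=p^{2}-z$ is not Liouvillian integrable.

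The substitution $p=-y'/y$ linearizes this Riccati equation to the Airy equation $y''=zy$. Any Liouvillian first integral of the Riccati foliation would produce a Liouvillian (rational) solution of the Riccati equation, hence a Liouvillian solution of the Airy equation. This contradicts Kovacic's algorithm \cite{Kovacic1986}: the differential Galois group of the Airy equation is $SL(2,\mathbb{C})$. The delicate point is justifying the step from a Liouvillian first integral of the Riccati foliation to a Liouvillian particular solution. I would handle it via Singer's theorem: a Darboux integrating factor forces a Darboux-type invariant curve of the Riccati foliation in $p$. Alternatively, I would classify the invariant curves polynomial in $p$ and use $\deg_p\leq1$ to reduce to algebraic solutions of the Riccati equation.
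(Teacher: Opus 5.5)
Your verification of $H_1,H_2,H_3$ is correct, and more direct than the paper's route, which passes to $v_\tau=1$, $z_\tau=z^2-v$ and maps back. The identities $\mathcal{X}z=\alpha$, $\mathcal{X}p=-\alpha^2w$ and $z-\alpha w=p^2$ give $\lambda_{1}=\lambda_{2}=\alpha p$. Then $\lambda_1+\lambda_2+\lambda_3=(2u-1)/w=\mbox{div}\,\mathcal{X}$ settles $I$ and $M$. Your reduction to the Riccati foliation is also the paper's. In the coordinates $(z,p)$ the original field is exactly $\alpha\mathcal{Y}$ with $\mathcal{Y}=\partial_z+(p^2-z)\partial_p$, so the rescaling by $w$ is unnecessary.

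The gap is in the non-Liouvillian claim, which is the substantive part of the statement. You assert that a Liouvillian first integral of the Riccati foliation yields a Liouvillian particular solution, but this is exactly what needs proof. Your proposed fix via Singer's theorem is not right as stated: a Darboux integrating factor $e^{g/f}\prod f_i^{l_i}$ need not contain any invariant algebraic curve. Once such curves are excluded, $f$ is constant (it must be a product of invariant curves), so you are left with $M=e^{g}$, $g\in\mathbb{C}[z,p]$. This case must be ruled out separately; it is precisely the last step of the paper's proof. It is easy: $\mathcal{Y}g=-\mbox{div}\,\mathcal{Y}=-2p$, and if $\deg_p g=n\geq1$ the coefficient of $p^{n+1}$ in $\mathcal{Y}g$ is $n\,g_n(z)\neq0$. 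Hence $g=g(z)$ and $g'(z)=-2p$, which is absurd.

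For excluding algebraic curves, your ``$\deg_p\leq1$'' is unjustified: the two Puiseux series $p=\pm\sqrt{z}+\dots$ at $z=\infty$, which the paper uses, only give $\deg_p\leq2$. No bound is needed, though. An invariant curve $f(z)$ would require $f'/f$ to be polynomial, so $f$ is constant. Each branch of a $p$-dependent irreducible invariant curve is an algebraic (not necessarily rational) solution $\phi$ of $p'=p^2-z$. Then $y=\exp(-\int\phi\,dz)$ would be a Liouvillian solution of $y''=zy$, contradicting Kovacic. With these two steps your argument is complete, and it avoids the paper's classification of degree-two curves via the symmetric square of the Airy equation.
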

\begin{proof}
Equation \eqref{eq:Ince_XV} is equivalent to the following dynamical system
\begin{equation}\label{eq:p1_1}
  w_{\tau}=u, \quad u_{\tau}=u^{2}/w-u/w-2\alpha^{3}w^{2}.
\end{equation}
One can find the invariant curve $H_{3}$ of \eqref{eq:p1_1} directly from definition \eqref{eq:L_def} assuming that $H$ depends only on $w$.

Now we apply the algorithm presented about to study integrability of \eqref{eq:p1_1}. With the help of scaling transformations $\tau\,'=\alpha \tau$, $w'=\alpha w$ we set $\alpha$ equal to $1$ in \eqref{eq:p1_1} (the primes are omitted in what follows). Further we use the transformations
\begin{equation}\label{eq:p1_2}
w=v-z^{2}, \quad u=2z(v-z^{2})+1,
\end{equation}
which can be inverted as follows
\begin{equation}\label{eq:p1_3}
z=\frac{u-1}{2w}, \quad v=\left(\frac{u-1}{2w}\right)^{2}+w.
\end{equation}
As a result, from \eqref{eq:p1_1}, taking into account that $\alpha=1$, we obtain
\begin{equation}\label{eq:p1_4}
v_{\tau}=1, \quad z_{\tau}=z^{2}-v.
\end{equation}
Consequently, integrability of \eqref{eq:p1_4} and \eqref{eq:p1_1} is evident now, since \eqref{eq:p1_4} can be converted into a Riccati equation.

Now we demonstrate that system \eqref{eq:p1_4} does not possess algebraic invariant curves and it is not Liouvillian integrable. System \eqref{eq:p1_4} is equivalent to
\begin{equation}\label{eq:p1_5}
z_{v}=z^{2}-v.
\end{equation}
The leading order terms of \eqref{eq:p1_5} corresponding to the asymptotic at $v=\infty$ are the terms on the right-hand side of \eqref{eq:p1_5}, which contain no derivatives. Thus, we see that there is no resonance corresponding to this asymptotic and its coefficients are uniquely determined \cite{Bruno2000,Bruno2004}. As a result, it is easy to demonstrate that \eqref{eq:p1_4} admits two Puiseux series around $v=\infty$ without arbitrary parameters, which are
\begin{equation}\label{eq:p1_5a}
z=\pm \sqrt{v}+\frac{1}{4v}\mp\frac{5}{32v^{5/2}}+\ldots
\end{equation}
Thus, the maximal degree of an irreducible algebraic invariant curve of \eqref{eq:p1_4} with respect to $z$ is $2$ (see, \cite{Demina2022}). Since the degree of \eqref{eq:p1_4} is also 2, the degree of the cofactor of an invariant curve is at most $1$ with respect to $z$. We assume that the cofactor is polynomial with respect to both $z$ and $v$ and look for invariant curves of \eqref{eq:p1_4} in the form
\begin{equation}\label{eq:p1_6}
H=A(v)z^{2}+B(v)z+C(v), \quad \lambda=e_{1}z+e_{2}v+e_{3}.
\end{equation}
After substituting \eqref{eq:p1_6} into the definition of an invariant curve for \eqref{eq:p1_4} and equating coefficients at different powers of $z$ we obtain
\begin{equation}\label{eq:p1_7}
\begin{gathered}
(e_{1}-2)A=0, \quad A_{v}-(e_{2}v+e_{3})A-(e_{1}-1)B=0, \\ B_{v}-(e_{2}v+e_{3})B-2vA-e_{1}C=0, \quad
C_{v}-(e_{2}v+e_{3})C-vB=0.
\end{gathered}
\end{equation}
Applying transformations $A=\exp\{v(e_{2}v+2e_{3})/2\}\tilde{A}$, $B=\exp\{v(e_{2}v+2e_{3})/2\}\tilde{B}$ and $C=\exp\{v(e_{2}v+2e_{3})/2\}\tilde{C}$ from \eqref{eq:p1_7} we obtain
\begin{equation}\label{eq:p1_7a}
\begin{gathered}
(e_{1}-2)\tilde{A}=0, \quad \tilde{A}_{v}-(e_{1}-1)\tilde{B}=0, \quad \tilde{B}_{v}-2v\tilde{A}-e_{1}\tilde{C}=0, \quad \tilde{C}_{v}-v\tilde{B}=0.
\end{gathered}
\end{equation}
Suppose that $\tilde{A}\neq0$ (i.e.  $A\neq0$). Then we set $e_{1}=2$ and from \eqref{eq:p1_7a} we find that
\begin{equation}\label{eq:p1_8}
\tilde{B}=\tilde{A}_{v}, \quad \tilde{C}=\frac{1}{2}\tilde{A}_{vv}-v\tilde{A},
\end{equation}
where $\tilde{A}$ satisfies the equation
\begin{equation}\label{eq:p1_9}
\tilde{A}_{vvv}-4v\tilde{A}_{v}-2\tilde{A}=0.
\end{equation}
This equation is the second symmetric power of the Airy equation $s_{vv}-vs=0$ (see, \cite{Whittaker,Chudnovsky,Bronstein,Almkvist}) and, therefore, its general solution has the form
\begin{equation}\label{eq:p1_10}
\tilde{A}=c_{1}\mbox{Ai}^{2}(v)+c_{2}\mbox{Ai}(v)\mbox{Bi}(v)+c_{3}\mbox{Bi}^{2}(v),
\end{equation}
$c_{i}$, $|c_{1}|^{2}+|c_{2}|^{2}+|c_{3}|^2\neq0$, $i=1,2,3$ are arbitrary constants. The Airy equation has no Liouvillian solutions (see, e.g. \cite{Kovacic1986}). Thus, system \eqref{eq:p1_4} does not have polynomial invariant curves of second degree with respect to $z$. Notice that we do not construct explicitly invariants of \eqref{eq:p1_4} and their cofactors in this case, because the integrability of \eqref{eq:p1_4} can be determined by invariant curves of degree 1 with respect to $z$.

Suppose that $\tilde{A}=0$ and $\tilde{B}\neq0$. Then from \eqref{eq:p1_7a} we have that $e_{1}=1$ and
\begin{equation}\label{eq:p1_11}
\tilde{B}_{vv}-v\tilde{B}=0, \quad \tilde{C}=\tilde{B}_{v}.
\end{equation}
We see that the function $B$ satisfies the equation for the Airy function (see, e.g. \cite{Olver}). Thus, there are no polynomial invariant curves of \eqref{eq:p1_4} of degree 1 with respect to $z$.

Let us explicitly construct transcendental invariants of \eqref{eq:p1_4} in the case of $\widetilde{A}=0$ and $\widetilde{B}\neq0$. We suppose that $e_{2}=e_{3}=0$ since it does not impact the integrability of \eqref{eq:p1_4}. As a result, we find two invariants
\begin{equation}\label{eq:p1_12}
\begin{gathered}
\widetilde{H}_{1}=\mbox{Ai}(v)z+\mbox{Ai}(v)^{'}, \quad \tilde{\lambda}_{1}=z,\\
\widetilde{H}_{2}=\mbox{Bi}(v)z+\mbox{Bi}(v)^{'}, \quad \tilde{\lambda}_{2}=z.
\end{gathered}
\end{equation}
Below we show that these invariants actually define integrability of \eqref{eq:p1_4} and \eqref{eq:p1_1}.

If we proceed further and assume that $\tilde{A}=\tilde{B}=0$ from \eqref{eq:p1_7a} we get that $e_{1}=0$ and $\tilde{C}$ is an arbitrary constant. This leads to the existence of the exponential invariant of \eqref{eq:p1_4}
\begin{equation}\label{eq:p1_11c}
E_{1}=\exp\{v(e_{2}v+2e_{3})/2\}, \quad L_{1}=e_{2}v+e_{3}.
\end{equation}
Since the divergence of the vector field associated to \eqref{eq:p1_4} depends only on $z$, this exponential invariant has no impact on integrability of \eqref{eq:p1_4}. Taking into account that there are no algebraic invariant curves of \eqref{eq:p1_4}, system \eqref{eq:p1_4} can be Liouvillian integrable if and only if it has a Darboux integrating factor of the form $R= \exp\{b(z,v)\}$, where $b(z,v)\in\mathbb{C}[z,v]$. The function $b$ satisfies the equation
\begin{equation}\label{eq:p1_11a}
(z-v^2)b_{z}+b_{v}+2z=0,
\end{equation}
whose general solution is
\begin{equation}\label{eq:p1_11b}
b=2\ln\left\{\mbox{Bi}(v)-\frac{\widetilde{H}_{2}}{\widetilde{H}_{1}}\right\}+S\left(\frac{\widetilde{H}_{2}}{\widetilde{H}_{1}}\right),
\end{equation}
where $\widetilde{H}_{1}$, $\widetilde{H}_{1}$ are given in \eqref{eq:p1_12} and $S$ is an smooth arbitrary function of its argument. From \eqref{eq:p1_11b} it is clear that $b$ is not polynomial and we conclude that system \eqref{eq:p1_4} is not Liouvillian integrable. Since transformations \eqref{eq:p1_2} (or \eqref{eq:p1_3}) preserve Liouvillian integrability, we obtain that dynamical system \eqref{eq:p1_1} and equation \eqref{eq:Ince_XV} are not Liouvillian integrable as well.

Using \eqref{eq:p1_12} and Proposition \ref{p:p5} it is clear that system \eqref{eq:p1_4} has the first integral and the integrating factor
\begin{equation}\label{eq:p1_13}
\tilde{I}=\widetilde{H}_{1}/\widetilde{H}_{2}, \quad \widetilde{M}=\left(\widetilde{H}_{1}\widetilde{H}_{2}\right)^{-1}.
\end{equation}
Notice that neither $\widetilde{H}_{1}$ nor $\widetilde{H}_{2}$ is a Puiseux polynomial at $v=\infty$, since the Airy functions do not have Puiseux expansions at the infinity, because their asymptotics there contain exponential terms. However the product $\widetilde{H}_{1}\widetilde{H}_{2}$ is a Puiseux polynomial of second degree with respect to $z$ because in terms like $\mbox{Ai}(v)\,\mbox{Bi}(v)$, $\mbox{Ai}^{'}(v)\,\mbox{Bi}(v)$ and $\mbox{Ai}(v)\,\mbox{Bi}^{'}(v)$ the exponential factors are cancelled out. Therefore, dynamical system \eqref{eq:p1_4} is Puiseux integrable at infinity.  Clearly that  $\widetilde{H}_{1}$ and $\widetilde{H}_{2}$  are Puiseux polynomials at the origin and system \eqref{eq:p1_4} is Puiseux integrable around $v=0$.

With the help of the transformations \eqref{eq:p1_3} one can map invariant curves \eqref{eq:p1_12} of \eqref{eq:p1_4} into invariant curves \eqref{eq:Ince_XV_IC} of \eqref{eq:p1_1}. Then integrability of \eqref{eq:p1_1} imminently follows from Proposition \ref{p:p5}.

Finally, we remark that integrating factor \eqref{eq:Ince_XV_M} is not of Puiseux type since neither invariant curves $H_{1}$ and $H_{2}$ nor their product are Puiseux polynomials. Therefore, system \eqref{eq:p1_1} and equation \eqref{eq:Ince_XV} are not Puiseux integrable.

This completes the proof.
\end{proof}

\begin{remark}
From the proof of Proposition \ref{p:p1} it follows that the Puiseux integrability is not conserved by algebraic transformations. This is not the case for the Liouvillian integrable systems, since it is clear that algebraic transformations preserve Liouvillian functions and the existence of a Liuvillian first integral yields the existence of a Darboux integrating factor. Therefore, there is a question of choosing the most convenient coordinate frame to study a dynamical system. Often it is provided by the algebraic invariant curves of the corresponding dynamical system (see \cite{GS}).
\end{remark}

\section{Integrability properties of \eqref{eq:Ince_V}.}

In the following statement the integrability of \eqref{eq:Ince_V} is established.

\begin{proposition}
\label{p:p2}
For an arbitrary value of the parameter $\beta$ equation \eqref{eq:Ince_V} is not Liouvillian integrable, it has the first integral
\begin{equation}\label{eq:Ince_V_FI}
I=H_{5}H_{6}^{-1},
\end{equation}
and the integrating factor
\begin{equation}\label{eq:Ince_V_M}
M=\left(H_{4}H_{5}H_{6}\right)^{-1},
\end{equation}
where $H_{4}$, $H_{5}$ and $H_{6}$ are invariant curves of \eqref{eq:Ince_V} with their respective cofactors
\begin{equation}\label{eq:Ince_V_IC}
\begin{gathered}
H_{4}=u+w^{2}-\beta^{2}, \quad \lambda_{4}=1, \quad u=w_{\tau}, \\
H_{5}=(w-\beta)I_{2\beta}\{2\sqrt{H_{4}}\}-\sqrt{H_{4}} I_{2\beta+1}\{2\sqrt{H_{4}}\}, \quad \lambda_{5}=-w,\\
H_{6}=(w-\beta)K_{2\beta}\{2\sqrt{H_{4}}\}+\sqrt{H_{4}} K_{2\beta+1}\{2\sqrt{H_{4}}\}, \quad \lambda_{6}=-w.
\end{gathered}
\end{equation}
Here $I_{2\beta}\{\theta\}$ and $K_{2\beta}\{\theta\}$ are the modified Bessel functions of the first and second kind, respectively.

Equation \eqref{eq:Ince_V} is Liouvillian integrable if and only if first integral \eqref{eq:Ince_V_FI} degenerates into an elementary function, i.e. at
\begin{equation}\label{eq:Ince_V_LIC}
2\beta=\pm n \pm \frac{1}{2},\quad n\in\mathbb{N}\cup\{0\}.
\end{equation}
\end{proposition}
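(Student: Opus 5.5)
I will mirror the proof of Proposition \ref{p:p1}. First I will write \eqref{eq:Ince_V} as the polynomial system $w_\tau=u$, $u_\tau=-(2w-1)u+w^2-\beta^2$. Direct substitution shows that $H_4=u+w^2-\beta^2$ satisfies $\mathcal{X}H_4=H_4$, so it is an algebraic invariant curve with $\lambda_4=1$.

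Next, following Step 1 of the algorithm, I will pass to the variables $s=H_4$ and $w$. Since $s_\tau=s$ and $w_\tau=s-w^2+\beta^2$, the orbits satisfy the Riccati equation $w_s=(s-w^2+\beta^2)/s$. The standard linearisation $w=s\,\psi_s/\psi$ turns this into a second-order linear ODE for $\psi(s)$. With $\theta=2\sqrt{s}$ it becomes a modified Bessel equation of order $2\beta$, whose solutions are $\psi=\sqrt{s}\,I_{\pm2\beta}$ or $\sqrt{s}\,K_{2\beta}$ (up to the exact power of $s$, which the computation will fix). Privileging $w$, I will look for invariant curves $H=A(s)w+B(s)$ with cofactor linear in $w$, as in \eqref{eq:p1_6}. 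Collecting powers of $w$ gives linear equations, and their solutions reproduce $H_5$ and $H_6$ with $\lambda_5=\lambda_6=-w$, after using the recurrences $I_\nu'=I_{\nu+1}+\frac{\nu}{\theta}I_\nu$ and $K_\nu'=-K_{\nu+1}+\frac{\nu}{\theta}K_\nu$. I will check this by direct verification. The divergence of the vector field is $-(2w-1)$, so $\lambda_4+\lambda_5+\lambda_6=1-2w=-\mbox{div}\mathcal{X}$, and also $\lambda_5-\lambda_6=0$. Proposition \ref{p:p5} then gives \eqref{eq:Ince_V_FI} and \eqref{eq:Ince_V_M}.

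For the Liouvillian part I will use the Singer criterion, namely the existence of a Darboux integrating factor \eqref{eq:M_Darboux}. Along the invariant curve $H_4$ the system reduces to the Riccati equation above. By Kovacic's theorem, a Darboux integrating factor exists if and only if this Riccati equation has an algebraic solution, that is, if and only if the associated Bessel equation has Liouvillian solutions. It is classical that the Bessel equation of order $\nu$ is Liouvillian exactly when $\nu\in\frac12+\mathbb{Z}$. With $\nu=2\beta$ this gives $2\beta=\pm n\pm\frac12$, which is \eqref{eq:Ince_V_LIC}. In that case the Bessel functions are elementary, so $H_5$ and $H_6$ are elementary and \eqref{eq:Ince_V_FI} is Liouvillian.

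For the converse I will argue as in Proposition \ref{p:p1}. I will bound the degree in $w$ of invariant curves using the Puiseux series at $s=\infty$; there are two branches $w\sim\pm\sqrt{s}$, so the degree is at most $2$. I will then classify the polynomial-in-$w$ invariant curves. Degree $2$ leads to the symmetric square of the Bessel equation, and degree $1$ leads to the Bessel equation itself. So for non-half-integer orders the only algebraic curve is $H_4$, and the exponential factors come only from the degree-zero case. I will then show that no combination $e^{g}H_4^{l}$ with $g$ polynomial satisfies condition (II) of Proposition \ref{p:p5}, by solving the linear PDE for $g$ as in \eqref{eq:p1_11a}.

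I expect the main obstacle to be this non-Liouvillian direction. The hard part is ensuring that the transformation $(w,u)\mapsto(w,s)$ preserves Darboux-ness, which holds because it is polynomial and invertible. The other hard part is correctly invoking the Kovacic-type result on when the Bessel equation has Liouvillian solutions.
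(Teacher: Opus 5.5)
Your proposal is correct and follows essentially the paper's route: use $H_4$ as a coordinate, reduce to a Riccati equation linearised by the modified Bessel equation of order $2\beta$, obtain $H_5,H_6$ as the degree-one-in-$w$ invariant curves with cofactor $-w$ and apply Proposition \ref{p:p5}, then decide Liouvillian integrability from the half-integer criterion for Bessel functions, the Puiseux degree bound and the classification of algebraic curves and exponential factors. The differences are minor: your $s=H_4$ is a unimodular polynomial change of variables, so the integrating factor carries over without a Jacobian (cleaner than the paper's $s^{2}=H_4$), and in the half-integer case you read Liouvillianity off the elementary Bessel functions rather than constructing the Darboux factor explicitly.

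Two small corrections. First, $\lambda_4+\lambda_5+\lambda_6=1-2w$ equals $+\mbox{div}\,\mathcal{X}$, not $-\mbox{div}\,\mathcal{X}$; that sign is exactly why every exponent in $M$ is $-1$. Second, the equivalence between a Darboux integrating factor and an algebraic solution of the Riccati equation is not Kovacic's theorem itself. It follows from Singer's theorem in one direction and from your exponential-factor/divergence argument in the other.
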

\begin{proof}
To proof this statement we convert \eqref{eq:Ince_V} into the following equivalent dynamical system
\begin{equation}\label{eq:p2_1}
 w_{\tau}=u, \quad u_{\tau}=-2wu+u+w^{2}-\beta^{2}.
\end{equation}
It is straightforward to demonstrate that \eqref{eq:p2_1} admits invariant curve $H_{4}$. Applying the above algorithm we use the transformation
\begin{equation}\label{eq:a1_1}
  u=s^{2}-w^{2}+\beta^{2}.
\end{equation}
that simplifies system \eqref{eq:p2_1}. As a result, from \eqref{eq:p2_1} we get
\begin{equation}\label{eq:a1_2}
  w_{\tau}=s^{2}-w^{2}+\beta^{2}, \quad s_{\tau}=s/2.
\end{equation}
Now it is clear that systems \eqref{eq:a1_2} and \eqref{eq:p2_1} are integrable since \eqref{eq:a1_2} is equivalent to the Riccati equation.

Let us prove that \eqref{eq:a1_2} and, hence \eqref{eq:p2_1} and \eqref{eq:Ince_V} are Liouvillian integrable if and only if \eqref{eq:Ince_V_LIC} holds. First, observe that transformation \eqref{eq:a1_1} preserves Liouvillian integrability. Second, for arbitrary $\beta$ system \eqref{eq:a1_2} admits one independent of $w$ invariant algebraic curve that is
\begin{equation}\label{eq:a1_3a}
\widetilde{H}_{3}=s, \quad \tilde{\lambda}_{3}=1/2.
\end{equation}
Third, we demonstrate that for arbitrary $\beta$ system \eqref{eq:a1_2} does not have algebraic invariant curves apart from $\widetilde{H}_{3}$. Indeed, consider the first-order differential equation associated to \eqref{eq:a1_2}
\begin{equation}\label{eq:a1_3}
  sw_{s}=2(s^{2}-w^{2}+\beta^{2}).
\end{equation}
Since the first two terms on the right-hand side of \eqref{eq:a1_3} give the highest degree upon substitution $w\sim s^{p}$ they are the leading order terms for asymptotic at $s=\infty$. These terms do not contain derivative and, hence, the corresponding asymptotic at $s=\infty$ does not contain arbitrary coefficients \cite{Bruno2000,Bruno2004}. Therefore, it is not difficult to demonstrate that solutions of \eqref{eq:a1_3} has only two Laurent expansions near $s=\infty$, which are
\begin{equation}\label{eq:a1_4}
 w=\pm s -\frac{1}{4}\pm \frac{16\beta^{2}-1}{32s}+\ldots\ ,
\end{equation}
Thus, we conclude that the maximal degree of an irreducible invariant curve of \eqref{eq:a1_2} with respect to $w$ is 2 (see, e.g. \cite{Demina2022}). Since the degree of \eqref{eq:a1_2} is also 2, the maximal degree of a polynomial with respect to $w$ cofactor is 1 (see \cite{Zhang}).

As in the case of Proposition \ref{p:p1}, integrability of \eqref{eq:a1_2} can be established only with the help of invariant curves of first degree with respect to $w$. In additional, it is straightforward to verify that for arbitrary $\beta$ invariant curves of second degree with respect to $w$ are expressed in terms of the modified Bessel functions and are algebraic if and only if \eqref{eq:Ince_V_LIC} is satisfied.

Therefore, we proceed with invariant curves of degree 1 with respect to $w$ and substitute
\begin{equation}\label{eq:a1_5}
 H=B(s)w+C(s), \quad \lambda=e_{1}w+e_{2}s+e_{3},
\end{equation}
into the definition of an invariant curve for \eqref{eq:a1_2}. Here $B(s)\not\equiv0$.

As a consequence, collecting coefficients at different powers of $w$ we obtain
\begin{equation}\label{eq:a1_6}
\begin{gathered}
 e_{1}+1=0, \quad  sB_{s}-2(e_{2}s+e_{3})B-2e_{1}C=0 \\
sC_{s}-2(e_{2}s+e_{3})C+2(s^{2}+\beta^{2})B=0.
 \end{gathered}
\end{equation}
Eliminating $C$ between the last two equations of \eqref{eq:a1_6} we obtain
\begin{equation}\label{eq:a1_7}
\begin{gathered}
s^{2}B_{ss}-(4[e_{2}s+e_{3}]-1)sB_{s}+\left(4[e_{2}s+e_{3}]^2-4[s^{2}+\beta^{2}]-2e_{2}s\right)B=0.
 \end{gathered}
\end{equation}
With the help of the transformation $B=s^{2e_{3}}{\rm e}^{2e_{2}s}B'$ from \eqref{eq:a1_7} we obtain that $B'$ satisfies the equation
\begin{equation}\label{eq:a1_8}
s^{2}B_{ss}^{'}+sB_{s}^{'}-4(s^{2}+\beta^{s})B^{'}=0,
\end{equation}
which is the equation for the modified Bessel function up to the scaling $\beta\rightarrow\beta/2$, $s\rightarrow s/2$. Consequently, we find that the general solution of \eqref{eq:a1_7} is
\begin{equation}\label{eq:a1_9}
B=s^{2e_{3}}{\rm e}^{2e_{2}s}\left(c_{4}I_{2\beta}(2s)+c_{5}K_{2\beta}(2s)\right).
\end{equation}
Here $c_{4}$ and $c_{5}$ are arbitrary constants. It is known that the Bessel functions degenerate into elementary ones if and only if their order is half integer \cite{Olver,Bateman,Kovacic1986}. Therefore, invariant curve \eqref{eq:a1_5} can be algebraic if and only if $2\beta=\pm n \pm 1/2$. In what follows we use the notation
\begin{equation}\label{eq:a1_10}
a_{k}=\frac{(n+k)!}{k!(n-k)!4^{k}}\,.
\end{equation}
Then we have that
\begin{equation}\label{eq:a1_11}
I_{\pm n \pm 1/2}(2s)=\frac{1}{\sqrt{2\pi}}\left({\rm e}^{2s}\sum\limits_{k=0}^{n}\frac{(-1)^{k}a_{k}}{s^{k+1/2}}\mp (-1)^{n}{\rm e}^{-2s}\sum\limits_{k=0}^{n}\frac{a_{k}}{s^{k+1/2}}\right),
\end{equation}
and
\begin{equation}\label{eq:a1_12}
K_{\pm n \pm 1/2}(2s)=\frac{\sqrt{\pi}}{2} {\rm e}^{-2s}\sum\limits_{k=0}^{n}\frac{a_{k}}{s^{k+1/2}},
\end{equation}
Now it is clear that there are two possibilities when an invariant curve \eqref{eq:a1_5} is algebraic and depends on $w$. In the first case we set $c_{4}=0$, $e_{1}=-e_{2}=-1$, $e_{3}=n/2+1/4$ and $c_{5}=2/\sqrt{\pi}$. Substituting these values into \eqref{eq:a1_12}, \eqref{eq:a1_9}, \eqref{eq:a1_7} and \eqref{eq:a1_5} we obtain that there is a polynomial invariant curve of \eqref{eq:a1_2} with the cofactor
\begin{equation}\label{eq:a1_13}
\tilde{\lambda}_{4}=-w+s+\frac{n}{2}+\frac{1}{4}
\end{equation}
In the same way one can demonstrate that at  $c_{5}=((-1)^{n}c_{4})/\pi$, $e_{1}=e_{2}=-1$, $e_{3}=n/2+1/4$ and $c_{4}=2\sqrt{\pi}$ there is another polynomial invariant curve of \eqref{eq:a1_1} with the cofactor
\begin{equation}\label{eq:a1_14}
\tilde{\lambda}_{5}=-w-s+\frac{n}{2}+\frac{1}{4}.
\end{equation}
It is easy to see that
\begin{equation}\label{eq:a1_14f}
-\tilde{\lambda}_{4}-\tilde{\lambda}_{5}+2n\tilde{\lambda}_{3}=-\mbox{div}\mathcal{X},
\end{equation}
where $\mathcal{X}$ is the vector field associated to \eqref{eq:a1_2}.

Therefore, \eqref{eq:a1_2} has a Darboux integrating factor and a Liouvillian first integral if $2\beta$ is half-integer. The latter can be easily constructed in the explicit form taking into account that \eqref{eq:a1_2} admits an exponential factor $E_{2}={\rm e}^{s}$ with the cofactor $L_{2}=s/2$. Indeed, one can see that $\tilde{\lambda}_{4}-\tilde{\lambda}_{5}-4L_2=0$ and, hence, at $2\beta=\pm n\pm 1/2$ system \eqref{eq:a1_2} has a rational first integral with an exponential factor that has the form
\begin{equation}\label{eq:a1_14a}
I=\frac{{\rm e}^{-4s}\left[(4w+4s+1)\sum\limits_{k=0}^{n}a_{k}s^{n-k}+2\sum\limits_{k=0}^{n}ka_{k}s^{n-k}\right]}{(4w-4s+1)\sum\limits_{k=0}^{n}(-1)^{k}a_{k}s^{n-k}+2\sum\limits_{k=0}^{n}(-1)^{k}ka_{k}s^{n-k}}.
\end{equation}

Now we study integrability of \eqref{eq:a1_2} at $2\beta\neq\pm n\pm 1/2$. We see that in this case the only algebraic invariant curve of \eqref{eq:a1_2} is \eqref{eq:a1_3}. Let us prove that the unique exponential factor of \eqref{eq:a1_2} at arbitrary $\beta$ is $E_{1}={\rm e}^{s}$.  The most general form of the exponential factor is
\begin{equation}\label{eq:a1_14b}
E_{3}=\exp\left\{\frac{b(w,s)}{s^{l}}\right\}, \quad l\in\mathbb{N}\cup\{0\},
\end{equation}
where $b(w,s)$ is a polynomial in $w$ and $s$, which is coprime with $s$. It is clear that the maximal degree of the cofactor of $E_{3}$ is one and substituting \eqref{eq:a1_14b} into \eqref{eq:L_def} with $L_{3}=e_{1}w+e_{2}s+e_{3}$ we obtain
\begin{equation}\label{eq:a1_14c}
2(\beta^{2}+s^{2}-w^{2})b_{w}+sb_{s}-lb-2(e_{1}w+e_{2}s+e_{3})s^{l}=0
\end{equation}
Assume that $l\neq0$. Then $b(w,s)|_{s=0}=\tilde{b}(w)\neq0$ (because $b$ and $s$ are coprime) satisfies the equation
\begin{equation}\label{eq:a1_14d}
2(\beta^{2}-w^{2})b_{w}-lb=0,
\end{equation}
whose general solution is
\begin{equation}\label{eq:a1_14e}
b=c_{6}(w-\beta)^{-\frac{l}{2\beta}}(w+\beta)^{\frac{l}{2\beta}},
\end{equation}
where $c_{6}\neq0$ is an arbitrary constant. This function is not polynomial unless $l$ is zero.

Suppose that $l=0$. Then directly integrating \eqref{eq:a1_14c} we find that for arbitrary $\beta$ the only polynomial solution is $b=2e_{2}s$ at $e_{3}=e_{1}=0$. Consequently, the only exponential factor of \eqref{eq:p3_2} for arbitrary $\beta$ is $E_{2}$ with the cofactor $L_{2}$. Since the divergence of \eqref{eq:a1_2} depends on $w$ and $L_{2}$ and $\tilde{\lambda}_{3}$ do not, we see that for arbitrary $\beta$ system \eqref{eq:a1_2} does not have a Darboux integrating factor. Thus, system \eqref{eq:a1_2} is Liouvillian integrable if and only if \eqref{eq:Ince_V_LIC} holds.

Suppose that $e_{2}=e_{3}=0$ in \eqref{eq:a1_9}, which has no impact on integrability of \eqref{eq:a1_2}. Then, setting either $c_{4}$ or $c_{5}$ to zero in \eqref{eq:a1_9} we obtain two invariant curves of \eqref{eq:a1_2}
\begin{equation}\label{eq:a1_15}
\begin{gathered}
\widetilde{H}_{6}=(w-\beta)I_{2\beta}(2s)-sI_{2\beta+1}(2s), \quad \tilde{\lambda}_{6}=-w,\\
\widetilde{H}_{7}=(w-\beta)K_{2\beta}(2s)+sK_{2\beta+1}(2s), \quad \tilde{\lambda}_{7}=-w
\end{gathered}
\end{equation}
Note that $a\tilde{\lambda}_{6}-(2+a)\tilde{\lambda}_{7}-\tilde{\lambda}_{3}=-\mbox{div}\mathcal{X}$, where $\mathcal{X}$ is the vector field associated to \eqref{eq:a1_2}, holds for any $a\in\mathbb{C}$. Then, we see from Proposition \ref{p:p5} that \eqref{eq:a1_2} is integrable. The parameter $a$ can be chosen arbitrarily, taking into account that an integrating factor is defined by a multiplication on a first integral. Assuming that $a=-1$  we get that
\begin{equation}\label{eq:a1_16}
\widetilde{M}=\left(\widetilde{H}_{6}\widetilde{H}_{7}\widetilde{H}_{3}\right)^{-1},
\end{equation}
is an integrating factor of \eqref{eq:a1_2}. Since the modified Bessel functions do not have Puiseux asymptotics at the infinity, neither $\widetilde{H}_{6}$ nor $\widetilde{H}_{7}$ is a Puiseux polynomial. However their product is a second degree Puiseux polynomial with respect to $w$ because exponential terms are cancelled out in the asymptotics for the products of the modified Bessel functions. Thus, we see that system \eqref{eq:a1_2} is Puiseux integrable. The first integral of \eqref{eq:a1_2} has the from $\widetilde{H}_{6}/\widetilde{H}_{7}$ since $\tilde{\lambda}_{6}-\tilde{\lambda}_{7}=0$. Let us remark that $\widetilde{H}_{6}$ is a Puiseux polynomial around the origin for arbitrary $\beta$, while this is not the case for $\widetilde{H}_{7}$, since for some values of the index the asymptotic around the origin for the modified Bessel function of the second kind contains logarithmic terms.

Inverting transformation \eqref{eq:a1_1} we obtain expressions \eqref{eq:Ince_V_FI}, \eqref{eq:Ince_V_IC} and \eqref{eq:Ince_V_M}. Let us remark that one can choose any sign in front of the square root while inverting \eqref{eq:a1_1}. We use the branch with the plus sign. Integrating factor \eqref{eq:Ince_V_M} is not of Puisex type, since in \eqref{eq:Ince_V_M} both variables are inside the argument of the Bessel function and, hence, \eqref{eq:Ince_V_M} cannot be a polynomial in one of the variables. Consequently, system \eqref{eq:p2_1} and equation \eqref{eq:Ince_V} are not Puiseux integrable for $2\beta\neq\pm n\pm 1/2$.

Finally, let us note that system \eqref{eq:p1_1} is system (1.2) from \cite{Demina2020} with $\sigma_{0}=-1$ and $\delta_{0}=\beta^{2}-1/4$. Therefore, in order to prove that \eqref{eq:p1_1} is Liouvillian integrable if and only if \eqref{eq:Ince_V_LIC} is satisfied one can use Theorem 1.2a from \cite{Demina2020}. In addition, is should be possible to transform first integral (1.6) from \cite{Demina2020} into \eqref{eq:a1_14a} and vice verse.  On the other hand, the integrability of \eqref{eq:p1_1} for arbitrary value of the parameter $\beta$ was not considered in \cite{Demina2020}.
This completes the proof.
\end{proof}

\section{Example of non-Puiseux integrable oscillator.}

In this section we study integrability of \eqref{eq:ie_1a}. It is equivalent to the following dynamical system
\begin{align}\label{eq:ie_1}
  w_{t}&= w(w-1)(w^{2}+u)+(b-1)w-a, \nonumber \\
  u_{t}&=-(2w^{3}-2w^{2}-1)(w^{2}+u)-2w(bw-w-a).
\end{align}
We assume that $a,b\not\in\mathbb{Z}$ and $a-b\neq-(1+q)$, $q\in\mathbb{N}\cup\{0\}$. Under these assumptions it is possible to express invariant curves of \eqref{eq:ie_1} in terms of the Kummer function. Other values of the parameters $a$ and $b$ should be considered elsewhere.

Let us apply the algorithm presented above to demonstrate integrability of \eqref{eq:ie_1}. We begin with the first step. It is easy to verify that \eqref{eq:ie_1} possesses an algebraic invariant curve $H=u+w^{2}$ with the cofactor $\lambda=1$. Consequently, we make the change of variables
\begin{equation}\label{eq:ie_2}
  z=w, \quad y=u+w^{2},
\end{equation}
which can be inverted as follows
\begin{equation}\label{eq:ie_3}
  w=z, \quad u=y-z^{2}.
\end{equation}
As a result, we obtain
\begin{gather}\label{eq:ie_4}
 y_{t}=y, \quad z_{t}= yz^{2}+(b-1-y)z-a.
\end{gather}
This dynamical system is equivalent to the Riccati equation
\begin{equation}\label{eq:ie_5}
 y z_{y}=yz^{2}+(b-1-y)z-a,
\end{equation}
and, hence, it is clear that both \eqref{eq:ie_1} and \eqref{eq:ie_4} are integrable.

Now we proceed with the algorithm in order to find transcendental invariant curves of \eqref{eq:ie_5} and \eqref{eq:ie_1} and establish different types of the integrability of \eqref{eq:ie_5} and \eqref{eq:ie_1}, depending on the values of the parameters $a$ and $b$.

At the second step, we privilege variable $z$ and assume that transcendental invariant curves of \eqref{eq:ie_5} are polynomials with respect to $z$. One can find the maximal degree of an invariant curve with respect to $z$ as follows. There are two distinct asymptotic for the solution of \eqref{eq:ie_5} at $y=\infty$. The corresponding dominant balances are $z^{2}-zy$ and $yz+a$, which do not contain terms with derivatives and, hence, do not produce any resonances. Therefore, we find the following asymptotic expansions at $y=\infty$ with uniquely determined coefficients
\begin{equation}\label{eq:ie_6a}
 Z^{(1)}(y)=y-b+2+\frac{a+b-2}{y}+\frac{(b-4)(a+b-2)}{y^{2}}+\ldots,
\end{equation}
\begin{equation}\label{eq:ie_6b}
 Z^{(2)}(y)=-\frac{a}{y}-\frac{ab}{y^{2}}+\frac{a(a-b-b^{2})}{y^{3}}+\ldots.
\end{equation}
Thus, the maximal degree of an invariant curve of \eqref{eq:ie_4} with respect to $z$ is 2 and we suppose that transcendental invariant curves  of \eqref{eq:ie_4} have the form
\begin{equation}\label{eq:ie_7}
  H=A(y)z^{2}+B(y)z+C(y).
\end{equation}
Here functions $A$, $B$ and $C$ are supposed to be sufficiently smooth and $|A|^{2}+|B|^{2}+|C|^{2}\neq0$.

Since the vector field \eqref{eq:ie_4} is polynomial of degree 3, we will assume that the cofactor has the form
\begin{equation}\label{eq:ie_8}
  \lambda=e_{6}z^2+e_{5}yz+e_{4}y^2+e_{3}z+e_{2}y+e_{1},
\end{equation}
where $e_{j}\in\mathbb{C}$, $j=\overline{1,6}$ are arbitrary parameters. This completes \textit{the third step of the algorithm}.

At the \textit{fourth step} we substitute \eqref{eq:ie_7} and \eqref{eq:ie_8} into the definition of an invariant curve \eqref{eq:L_def} for \eqref{eq:ie_4}. As a result, we obtain a system of \textit{linear} algebraic and differential equations for $A$, $B$, $C$ and $e_{j}$, $j=\overline{1,6}$:
\begin{equation}\label{eq:ie_8a}
\begin{gathered}
e_{6}A=0, \quad ([e_{5}-2]y+e_{3})A+e_{6}B=0, \\
yA_{y}+(2b-e_{1}-2-(e_{2}+2)y-e_{4}y^{2})A-([e5-1]y+e_{3})B-e_{6}C=0,\\
yB_{y}+(b-e_{1}-1-(e_{2}+1)y-e_{4}y^{2})B-(e_{5}y+e_{3})C-2aA=0,\\
yC_{y}-(e_{4}y^{2}+e_{2}y+e_{1})C-aB=0.
  \end{gathered}
\end{equation}
Let us begin with the simplest case of invariant curves that depend only on $y$, i.e. we assume that $A=B=0$. As a consequence, we get that $e_{6}=e_{5}=e_{3}=0$ and
\begin{equation}\label{eq:ie_9}
C_{1}=c_{1}y^{e_{1}}\exp\left\{\frac{e_{4}y^{2}}{2}+e_{2}y\right\},
\end{equation}
where $c_{1}\neq0$ is an arbitrary constant. This essentially leads to an invariant curve and an exponential factor that depend only on $y$:
\begin{equation}\label{eq:ie_10}
\widetilde{H}_{8}=y, \quad \widetilde{\lambda}_{8}=1,  \quad E_{4}={\rm e}^{y}, \quad L_{4}=y,
\end{equation}
while other invariants that depend only on $y$ can be found from $\widetilde{H}_{8}$ and $E_{4}$.

Since the divergence of the vector field associated to \eqref{eq:ie_4} explicitly depends on $z$, one cannot establish its integrability using only \eqref{eq:ie_10}.

In what follows, without loss of generality, we assume that $e_{4}=e_{1}=e_{2}=0$. This can be done because the divergence of the vector field associated to \eqref{eq:ie_4} contains no terms proportional to $y^{2}$. The constant term and the term proportional to $y$ are covered by $\widetilde{H}_{8}$ and $E_{4}$. This is equivalent to the change of variables $A\rightarrow C_{1}\tilde{A}$, $B\rightarrow C_{1}\tilde{B}$, $C\rightarrow C_{1}\tilde{C}$ in system \eqref{eq:ie_8a}.

Suppose that $A=0$ and $B\neq0$. Then, from \eqref{eq:ie_8a} we obtain that $e_{6}=e_{3}=0$, $e_{5}=1$ and $B$ and $C$ satisfy the following system of equations
\begin{equation}\label{eq:ie_11}
yB_{y}+(b-1-y)B-yC=0, \quad yC_{y}-aB=0.
\end{equation}
Eliminating $B$ from \eqref{eq:ie_11} we find that $C$ is a solution the equation
\begin{equation}\label{eq:ie_12}
yC_{yy}+(b-y)C_{y}-aC=0.
\end{equation}
Provided that the above assumptions on the parameters $a$ and $b$ hold, the general solution of this equation is expressed in terms of the Kummer function $\mathcal{M}(a,b,y)$ \cite{Olver,METF2022}
\begin{equation}\label{eq:ie_13}
C_{2}=c_{6}\mathcal{M}(a,b,y)+c_{7}y^{1-b}\mathcal{M}(1+a-b,2-b,y),
\end{equation}
where $c_{6}$ and $c_{7}$ are arbitrary constants such that $|c_{6}|^{2}+|c_{7}|^{2}\neq0$.

As a result, we find two transcendental invariant curves of \eqref{eq:ie_4}
\begin{equation}\label{eq:ie_14}
\begin{gathered}
 \widetilde{H}_{9}=(z-1)\mathcal{M}(a,b,y)-z\mathcal{M}(a+1,b,y), \quad \widetilde{\lambda}_{9}=zy,  \quad\quad\quad \quad\quad\quad\quad \\
\widetilde{H}_{10}=y^{1-b}\left\{ ([y+a-b]z+a)\mathcal{M}(a-b+1,2-b,y)-\right.  \quad\quad\quad\quad\quad \quad\quad\quad \\
 \quad\quad\quad\quad\quad \quad\quad\quad \quad\quad\quad \quad\quad\quad \left. (a-1)z\mathcal{M}(a-b,2-b,y) \right\}, \quad \widetilde{\lambda}_{10}=zy,
\end{gathered}
\end{equation}
Note that the invariant curve $\widetilde{H}_{10}$ has a multiplier $y^{1-b}$, which is $\widetilde{H}_{8}^{1-b}$. Therefore, instead of $\widetilde{H}_{10}$ one can consider $\widetilde{H}_{11}=\widetilde{H}_{10}y^{b-1}$ with the cofactor $\widetilde{\lambda}_{11}=zy+b-1$.

Consequently, we find the first integral and integrating factor of \eqref{eq:ie_4}
\begin{equation}\label{eq:ie_14a}
\begin{gathered}
I=\widetilde{H}_{9}\widetilde{H}_{10}^{-1}, \quad M=\widetilde{H}_{9}^{\mu}\widetilde{H}_{10}^{-2-\mu}\widetilde{H}_{8}^{-b}E_{4},
  \end{gathered}
\end{equation}
where $\mu\in\mathbb{C}$ is an arbitrary parameter. Let us remark that the parameter $\mu$ appears in \eqref{eq:ie_14a} because any integrating factor is defined up to a multiplication by a first integral. This completes the application of the algorithm and establishes the integrability of \eqref{eq:ie_4} and, thus, of \eqref{eq:ie_1} and \eqref{eq:ie_1a}.

For any value of the parameter $\mu$ and for arbitrary $a$ and $b$, satisfying conditions given after formula \eqref{eq:ie_1}, integrating factor given in \eqref{eq:ie_14a} is not a Puiseux polynomial around $\infty$ and, hence, globally. It is clear that $\widetilde{H}_{8}$ and the exponent of $E_{4}$ are Puiseux polynomial everywhere on $\mathbb{C}$. Thus, we have to consider $\widetilde{H}_{9}$ and $\widetilde{H}_{10}$. It is well known (see e.g. \cite{Olver}) that the Kummer function has essential singularity at infinity and its asymptotic at $\infty$ contains an exponential multiplier. Substituting the first term of the expansion for the Kummer function at $\infty$  into \eqref{eq:ie_14a} one can verify that it contains exponential multiplicator at any value of $\mu$ and, hence, is not of Puiseux type.

Furthermore, there are no other forms of the first integral and integrating factors of \eqref{eq:ie_4} than those given in \eqref{eq:ie_14a}. Indeed, an arbitrary smooth function of $I=\widetilde{H}_{9}\widetilde{H}_{10}^{-1}$ is the general solution of the partial differential equation for the first integral of \eqref{eq:ie_14a}, which can be easily obtained by solving \eqref{eq:ie_5}. The function $M=\widetilde{H}_{9}^{\mu}\widetilde{H}_{10}^{-2-\mu}\widetilde{H}_{8}^{-b}E_{4}$ is a solution of the partial differential equation for the integrating factor and can be constructed from the expression for the first integral. Therefore, the product $MR(I)$, where $R$ is an arbitrary smooth function, is the most general form of the integrating factor for \eqref{eq:ie_4}. All these expressions have no Puiseux expansions at $\infty$ for the values of $a$ and $b$ when the Kummer function does not degenerate. Therefore, for these values of $a$ and $b$ dynamical system \eqref{eq:ie_4} is not Puiseux integrable.

We see that the fact that the confluent hypergeometric equation \eqref{eq:ie_12} has an irregular singularity at infinity is an obstacle to the Puiseux asymptotic expansion at $\infty$ for the first integral and integrating factor of \eqref{eq:ie_4}. Moreover, there are no algebraic transformations that change the behaviour of confluent hypergeometric functions at $\infty$. Thus, we obtain the following result:

\begin{proposition}
\label{p:p_Puiseux}
Dynamical system \eqref{eq:ie_4} is not globally Puiseux integrable  up to an arbitrary algebraic change of the variables at the values of the parameters $a$ and $b$ for which the Kummer function does not degenerate.
\end{proposition}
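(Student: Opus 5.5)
The plan is to show that every integrating factor of \eqref{eq:ie_4} has the form $MR(I)$ with $M$, $I$ from \eqref{eq:ie_14a}, and that no such function is of the Puiseux form \eqref{eq:Puiseux_IF} near $y=\infty$. We then check that this obstruction survives algebraic changes of variables.

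First, completeness. Since \eqref{eq:ie_4} is equivalent to the Riccati equation \eqref{eq:ie_5}, its general solution is a Möbius function of the constant of integration. The coefficients of that Möbius map are built from two independent solutions of \eqref{eq:ie_12}, namely $\mathcal{M}(a,b,y)$ and $y^{1-b}\mathcal{M}(1+a-b,2-b,y)$. Under the standing assumptions $a,b\notin\mathbb{Z}$ and $a-b\neq-(1+q)$, these are well defined and independent. It follows that every first integral is a smooth function of $I=\widetilde{H}_{9}\widetilde{H}_{10}^{-1}$. The quotient of any two integrating factors is a first integral, so every integrating factor equals $MR(I)$ for some smooth $R$.

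Second, the asymptotics at $y=\infty$. By \cite{Olver}, $\mathcal{M}(a,b,y)\sim \frac{\Gamma(b)}{\Gamma(a)}e^{y}y^{a-b}(1+O(1/y))$ in a sector around the positive real axis. The hypotheses exclude the degenerate cases where $1/\Gamma(a)$ or $1/\Gamma(1+a-b)$ vanish, so this exponential leading term is genuinely present. The same holds for the second solution and for $\mathcal{M}(a+1,b,y)$ and $\mathcal{M}(a-b,2-b,y)$.

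Substituting into \eqref{eq:ie_14} gives
\[
\widetilde{H}_{9}\sim e^{y}y^{\gamma_{9}}\bigl(p_{9}(z)+\dots\bigr),\qquad
\widetilde{H}_{10}\sim e^{y}y^{\gamma_{10}}\bigl(p_{10}(z)+\dots\bigr),
\]
where $p_{9}$ and $p_{10}$ are linear in $z$. Hence
\[
M\sim e^{y}e^{-2y}\,(\text{Puiseux polynomial})=e^{-y}(\dots).
\]
The factor $e^{-y}$ could in principle be absorbed into the exponent $\widetilde{A}/\widetilde{B}$ of \eqref{eq:Puiseux_IF}. The real obstruction is the subleading, Stokes-type behaviour. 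In the other sector, $\mathcal{M}$ is dominated by its $(-y)^{-a}$ branch, so the two asymptotic regimes of $\widetilde{H}_{9}$ and $\widetilde{H}_{10}$ differ by exponentially small but nonvanishing terms. A single Puiseux expansion $\exp\{\widetilde{A}/\widetilde{B}\}\prod\widetilde{C}_{j}^{\alpha_{j}}$ in $y^{1/N}$ cannot reproduce this, because such a function has the same form in every direction at $y=\infty$.

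The same argument applies to $MR(I)$. The integral $I$ tends to a nonconstant function of $z$ along one direction, but it carries the essential singularity of the confluent hypergeometric equation at $\infty$. Therefore $R(I)$ is either constant or inherits the essential singularity, and in neither case can it cancel the obstruction in $M$.

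Third, algebraic changes of variables. An algebraic map sends Puiseux series in $y$ near $\infty$ to Puiseux series, possibly at a different point and with the roles of the variables possibly interchanged. It cannot remove an irregular singularity: an irregular singularity of \eqref{eq:ie_12} at $\infty$ pulls back to an irregular singularity of the transformed linear equation at some point. Consequently, the invariant curves and integrating factors in the new coordinates still contain exponentials of a power of the local parameter, which are not Puiseux polynomials. Puiseux integrability near that point therefore fails, and so global Puiseux integrability fails.

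\textbf{Main obstacle.} The hardest step is excluding every choice of $R$, together with the exponential prefactor allowed in \eqref{eq:Puiseux_IF}. I would first try to argue through monodromy and Stokes data. A Puiseux-type integrating factor has finite, quasi-unipotent local monodromy at $\infty$ after extracting the exponential. The Stokes multipliers of Kummer's equation are nonzero exactly when the function does not degenerate. These two facts are incompatible, and that incompatibility is the contradiction the proof needs.
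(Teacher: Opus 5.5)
Your three steps are the paper's own: every integrating factor is $MR(I)$, the Kummer equation \eqref{eq:ie_12} has an irregular singularity at $y=\infty$, and algebraic maps cannot remove it. Your caveat in Step 2 is correct and important. ${\rm e}^{-y}$ is of the form $\exp\{\widetilde{A}/\widetilde{B}\}$, so an exponential multiplier in the asymptotics decides nothing by itself, and that is exactly the reason the paper gives. But you never supply the argument that replaces it, so the decisive step is missing. The substitutes you offer also do not work.

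\emph{(i) Sector asymptotics cannot decide, and $R(I)$ can remove the exponentials.} Let $U(a,b,y)$ be the recessive solution of \eqref{eq:ie_12}. The invariant curve $H_U=(yU_y/a)z+U$ has cofactor $yz$ and is a constant-coefficient combination $\alpha\widetilde{H}_{9}+\beta\widetilde{H}_{10}=\widetilde{H}_{10}(\alpha I+\beta)$. Hence $\widetilde{H}_{9}^{-1}H_U^{-1}\widetilde{H}_{8}^{-b}E_{4}=M|_{\mu=-1}(\alpha I+\beta)^{-1}$ lies in your family. In a sector around the positive real axis its expansion contains no exponential at all; it realizes a formal factor of type \eqref{eq:ie_18} with $d_1=d_2=-1$. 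So whatever obstruction remains is invisible in any single sector, and ``$R(I)$ cannot cancel it'' needs a real proof.

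\emph{(ii) The monodromy mechanism fails.} Exponents $\alpha_j\in\mathbb{C}$ give arbitrary scalar monodromy, not finite or quasi-unipotent monodromy. Moreover $M|_{\mu=-1}=\widetilde{H}_{9}^{-1}\widetilde{H}_{11}^{-1}y^{-1}{\rm e}^{y}$ is single-valued for $y\in\mathbb{C}^{*}$, since $\widetilde{H}_{9}$ and $\widetilde{H}_{11}$ are entire in $y$. So there is no monodromy for nonzero Stokes multipliers to contradict.

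\emph{(iii) Step 3 is inconsistent with Step 2.} It falls back on ``exponentials of a power of the local parameter'', which Step 2 itself showed is no obstruction. Also, an arbitrary algebraic change of $(y,z)$ produces no ``transformed linear equation''.

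What closes the gap at $y=\infty$ is an algebraic argument on the factors; it is the precise content of your Stokes remark.
\begin{enumerate}
\item Let $K$ be the field of convergent Puiseux series at $y=\infty$. It is algebraically closed, so the Puiseux polynomials in \eqref{eq:Puiseux_IF} split into factors $z-Z(y)$ with $Z\in K$.
\item Since $\mathcal{X}\ln M=-\mbox{div}\,\mathcal{X}=y-b-2yz$ is polynomial in $z$, compare poles in $z$, after reducing $\widetilde{A}/\widetilde{B}$ and collecting exponents. Every root that actually occurs must satisfy $yZ_y=yZ^{2}+(b-1-y)Z-a$, i.e.\ it is a \emph{convergent} Puiseux solution of \eqref{eq:ie_5} at $\infty$.
\item By dominant balance there are exactly two formal such solutions. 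Via $z=-aC/(yC_y)$, the zero set of $Bz+C$ from \eqref{eq:ie_11}, they correspond to the formal solutions $y^{-a}{}_2F_0(a,a-b+1;;-1/y)$ and ${\rm e}^{y}y^{a-b}{}_2F_0(1-a,b-a;;1/y)$ of \eqref{eq:ie_12}. Both diverge, because neither series terminates when the Kummer function does not degenerate.
\item Hence every factor is $z$-free, and the integrating factor is $c(y)\exp\{P\}$ with $P=\sum_{k\le d}p_k(y)z^k\in K[z]$. The coefficient of $z^{d+1}$ in $\mathcal{X}P$ is $d\,y\,p_d$, which forces $d=0$. Then nothing can produce the term $-2yz$.
\end{enumerate}
This excludes all integrating factors at once, with no analysis of $R$. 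Convergence is essential here, since formally the system passes the test at the end of Section 5.

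For the clause on algebraic changes of variables, the coordinate-free obstruction that must be transported is the divergence of these two formal invariant branches. Geometrically, they are the centre manifolds of the saddle-nodes at $z=0$ and $z=1$ on the line $y=\infty$ of the compactified system. Neither your Step 3 nor the paper's one-line assertion shows that this obstruction survives in arbitrary algebraic coordinates.
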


Let us formally prove that the only algebraic invariant curve of \eqref{eq:ie_4} is $\widetilde{H}_{8}$ and system does not have Darboux integrating factor. To do this we find transcendental invariant curves of \eqref{eq:ie_4} that are of second degree with respect to $z$. Assuming again that $e_{4}=e_{1}=e_{2}=0$ and $A\neq0$ from system \eqref{eq:ie_8} we obtain
that $e_{5}=2$, $e_{3}=0$,
\begin{equation}\label{eq:ie_15}
\begin{gathered}
B=\frac{yC_{y}}{a}, \quad A=\frac{y}{2a^{2}}\left(yC_{yy}+(b-y)C_{y}-2aC\right),
  \end{gathered}
\end{equation}
and $C$ is a solution of the equation
\begin{equation}\label{eq:ie_16}
\begin{gathered}
y^2C_{yyy}+3y(b-y)C_{yy}+(2y^2-4(a+b)y+2b^2-b)C_{y}-4a\left(b-y-\frac{1}{2}\right)C=0.
  \end{gathered}
\end{equation}
The latter equation is the second symmetric power of \eqref{eq:ie_12} and, hence, its general solution is expressed in terms of $\mathcal{M}^{2}(a,b,y)$, $\widetilde{\mathcal{M}}^{2}(a,b,y)$ and $\mathcal{M}(a,b,y)\widetilde{\mathcal{M}}(a,b,y)$, where $\widetilde{\mathcal{M}}=y^{1-b}\mathcal{M}(1+a-b,2-b,y)$. Consequently, we obtain that for arbitrary $a$ and $b$ there are no algebraic invariant curves of \eqref{eq:ie_4} apart from $H_{18}$. Thus, there can exist a Darboux integrating factor if and only if there is an exponential factor of the form $E_{5}=\exp\left\{v(z,y)/y^{l}\right\}$, $l\in
\mathbb{N}\cup\{0\}$ and $v(z,y)\in\mathbb{C}[z,y]$. It is not difficult to verify that if $v_{z}=0$, the corresponding exponential factor is given by $E_{4}$. There are no polynomial solutions in the case of $v_{y}=0$. Thus, we assume that $v_{z}v_{y}\neq0$ and $v(z,y)=v_{m}(y)z^{m}+\ldots$, $m\in\mathbb{N}$ and $v_{m}\in \mathbb{C}[y]$. Upon substituting this expression for the exponential factor into \eqref{eq:L_def} and finding coefficient at the highest power of $z$ we get that $mv_{m}=0$, which is a contradiction. Thus, we see that \eqref{eq:ie_4} is not Liouvillian integrable for arbitrary $a$ and $b$. Notice that this result also follows from Proposition \ref{p:p_Puiseux}.

Finally, let us briefly consider the application of the test for Puiseux integrability of dynamical systems proposed in \cite{Demina2022} around $\infty$. The first step that consists in the classification of Puiseux series is already done (see \eqref{eq:ie_6a} and \eqref{eq:ie_6b}). The local forms of the cofactors corresponding to the series $Z^{(1,2)}(y)$  are
\begin{equation}\label{eq:ie_17}
  \lambda^{(1,2)}=y z+y Z^{(1,2)}(y)+b-1-y.
\end{equation}
Then, it is not difficult to verify that there is no exponential factors of the form $\exp\{u(y)/(z-Z^{(1,2)}(y))\}$. Consider exponential factors of the from $\exp\{v(y)z^{l}\}$ with the cofactor $L(y,z)$. The degree of $L(y,z)$ with respect to $z$ is $l+1$ and the divergence of the vector field associated to \eqref{eq:ie_4} is a linear function of $z$. Thus, $l=0$. Further, according to the algorithm of \cite{Demina2022} we suppose that there is an integrating factor
\begin{equation}\label{eq:ie_18}
M=y^{d_{0}}h(y){\rm e}^{v_{0}(y)}\prod\limits_{j=1}^{2}\left\{z-Z^{(j)}\right\}^{d_{j}}, \quad d_{j}\in\mathbb{C},
\end{equation}
and $v_{0}(y)$ and $h(y)$ are some Puiseux series around infinity. The cofactor of $y^{d_{0}}h(y){\rm e}^{v_{0}(y)}$ is
\begin{equation}\label{eq:ie_19}
\lambda_{0}=yV(y),
\end{equation}
where
\begin{equation}\label{eq:ie_20}
V(y)=\sum\limits_{k=0}^{\infty}h_{k}y^{1-k}.
\end{equation}
We substitute these expressions into the correlation
\begin{equation}\label{eq:ie_21}
d_{1}\lambda^{(1)}+d_{2}\lambda^{(2)}+\lambda_{0}+2yz+b-y=0,
\end{equation}
and collect coefficients at different powers of $z$. As a result, from the coefficient at $z$ we obtain that $d_{1}+d_{2}+2=0$ and from the rest of relations one can consequently find coefficients $h_{k}$, $k\in\mathbb{N}\cup\{0\}$.

Thus, we see that formally dynamical system \eqref{eq:ie_4} satisfies the algorithm of \cite{Demina2022} for Puiseux integrability. Moreover, this algorithm does not produce any relations on the parameters $a$ and $b$ that can lead to the Puiseux integrability. Thus, we see that the local algorithm of \cite{Demina2022} allows us to check only necessary conditions for the Puiseux integrability. The reason behind this is the fact that local expansions like \eqref{eq:ie_6a} and \eqref{eq:ie_6b} does not allow us to detect essential singularities. Actually, this is a well known phenomenon in the theory of the Painlev\'e equations (see, e.g. \cite{Ince,Ablowitz1980}), where it has been demonstrated that the existence of a local Laurent or Puiseux expansion does not guarantee the absence of an essential singularity.

On the other hand, if we use the algorithm proposed in this work, we can detect essential singularities using standard techniques for linear differential equations. Indeed, recall that the algorithm produces only linear correlations on the coefficients of an invariant curve, which is a polynomial with respect to the privileged variable. Consequently, these coefficients satisfy some linear differential equations with variable coefficients. What is more, for polynomial or rational vector fields the coefficients of these ordinary differential equations will be rational. The singularities of solutions of linear ordinary differential equations with rational coefficients, including singularities at $\infty$, can be classified by analysing only the coefficients without the need to find an explicit solution \cite{Forsyth,Novokshenov1986,Ilyashenko2008}. Thus, our algorithm allows one to deal with the cases where the integrating factor and/or the first integral have essential singularities. In other words, we can formally prove the integrability of a dynamical system and study singularities of invariant curves, a first integral and an integrating factor without explicitly constructing them. This can be summarized as follows:

\begin{remark}
  Suppose that we consider a polynomial or a rational dynamical system or a dynamical system that can be transformed into one of these. Then the linear ordinary differential equations for the coefficients of a polynomial with respect to the privileged variable invariant curve have rational coefficients. This allows one to classify singularities in the complex plane of invariant curves, a first integral and an integrating factor, if they exist, without explicitly constructing them. This can be done with the help of the general theory of linear differential equations (see, e.g.\cite{Forsyth,Novokshenov1986,Ilyashenko2008}).
\end{remark}

\section{Families of non-Liouvillian integrable oscillators}

Now let us build the equivalence class of \eqref{eq:Ince_XV} with respect to the \eqref{eq:GS}. In what follows we use an axillary function $l$  that is defined by
\begin{equation}\label{eq:l_def}
  l=hfg+fg_{x}-gf_{x}.
\end{equation}

The equivalence criterion for \eqref{eq:Lienard} and \eqref{eq:Ince_XV} can be summarized as follows
\begin{proposition}
\label{p:p3}
Suppose that $f\neq0$ and $g\neq0$. Then an equation from \eqref{eq:Lienard} is equivalent to \eqref{eq:Ince_XV} via \eqref{eq:GS} if and only if
\begin{equation}\label{eq:criterion_1}
6glf_{x}+3l^{2}-2fgl_{x}=0,
\end{equation}
where $l$ is given by \eqref{eq:l_def}. The functions that generate equivalence transformations \eqref{eq:GS} are
\begin{equation}\label{eq:p1_trans}
F^{3}=\frac{l}{4\alpha^{3}f^{3}}, \quad G=fF.
\end{equation}
It is assumed that $l\neq0$ and $fl_{y}-3lf_{y}\neq0$, since otherwise equivalence transformations degenerate.

Any equation from \eqref{eq:Lienard} that satisfies \eqref{eq:criterion_1} possesses a first integral
\begin{equation}\label{eq:integral_1}
  I=H_{7}H_{8}^{-1},
\end{equation}
and an integrating factor
\begin{equation}\label{eq:M_1}
M=\left(\frac{F_{x}}{G}\right)^{2} \left(H_{7}H_{8}H_{9}\right)^{-1},
\end{equation}
where
\begin{equation}\label{eq:Ic_1}
\begin{gathered}
H_{7}=\mbox{Ai}^{\,'}\left\{\alpha F+\frac{(F_{x}y-G)^{2}}{4\alpha^{2} F^{2}G^{2}}\right\}+\frac{F_{x}y-G}{2\alpha FG} \mbox{Ai}\left\{\alpha F+\frac{(F_{x}y-G)^{2}}{4\alpha^{2} F^{2}G^{2}}\right\},\\ \lambda_{7}=\frac{F_{x}y-G}{2F},\\
H_{8}=\mbox{Bi}^{\,'}\left\{\alpha F+\frac{(F_{x}y-G)^{2}}{4\alpha^{2} F^{2}G^{2}}\right\}+\frac{F_{x}y-G}{2\alpha FG} \mbox{Bi}\left\{\alpha F+\frac{(F_{x}y-G)^{2}}{4\alpha^{2} F^{2}G^{2}}\right\}, \\ \lambda_{8}=\frac{F_{x}y-G}{2F},\\
H_{9}=F, \quad \lambda_{9}=\frac{F_{x}y}{F}.
 \end{gathered}
\end{equation}
are invariant curves of equations from \eqref{eq:Lienard} defined by \eqref{eq:criterion_1}.
\end{proposition}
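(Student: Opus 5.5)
The plan is to substitute the nonlocal transformation \eqref{eq:GS} into \eqref{eq:Ince_XV} and compare the result with \eqref{eq:Lienard} term by term. With $w=F(x)$ and $d\tau=G(x)dt$ we have $w_\tau=F_x x_t/G$ and
\[
w_{\tau\tau}=\frac{1}{G}\frac{d}{dt}\left(\frac{F_x x_t}{G}\right)=\frac{F_x}{G^2}\left(x_{tt}+\left(\frac{F_{xx}}{F_x}-\frac{G_x}{G}\right)x_t^2\right).
\]
Substituting into \eqref{eq:Ince_XV} and multiplying by $G^2/F_x$ gives an equation of the form \eqref{eq:Lienard}. Matching the coefficients of $x_t^2$, $x_t$ and $x_t^0$ yields
\[
h=\frac{F_{xx}}{F_x}-\frac{G_x}{G}-\frac{F_x}{F},\qquad f=\frac{G}{F},\qquad g=\frac{2\alpha^3F^2G^2}{F_x}.
\]
The second relation gives $G=fF$ at once, which is the second formula in \eqref{eq:p1_trans}.

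Next I substitute $G=fF$ into the first and third relations. The third becomes $gF_x=2\alpha^3f^2F^4$, so $(F^{-3})_x=-6\alpha^3f^2/g$. The first becomes $h=F_{xx}/F_x-f_x/f-2F_x/F$. Now I compute $l=hfg+fg_x-gf_x$. Writing $g=2\alpha^3f^2F^4/F_x$, I get $g_x/g=2f_x/f+4F_x/F-F_{xx}/F_x$. Inserting this and the expression for $h$, the $F_{xx}$ terms cancel and
\[
l=fg\left(\frac{2F_x}{F}\right)=4\alpha^3f^3F^3,
\]
which is the first formula in \eqref{eq:p1_trans}. This gives necessity of the form of $F$. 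The compatibility condition comes from inserting $F^3=l/(4\alpha^3f^3)$ back into $(F^{-3})_x=-6\alpha^3f^2/g$. This reads $\bigl(4\alpha^3f^3/l\bigr)_x=-6\alpha^3f^2/g$, that is $g(3f^2f_xl-f^3l_x)/l^2=-\tfrac32f^2$. After clearing denominators this is exactly \eqref{eq:criterion_1}.

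For sufficiency I reverse the argument. Given \eqref{eq:criterion_1}, I define $F$ and $G$ by \eqref{eq:p1_trans}, with $l\ne0$ and $fl_x-3lf_x\neq0$ so that $F_x\neq0$ and the map is nondegenerate. Then \eqref{eq:criterion_1} gives the third relation, and the $h$-relation follows from the definition of $l$ by the same computation run backwards. The main bookkeeping obstacle is this last identity, namely checking that $h$ is recovered for arbitrary $h$. I expect it to be consistent because $l$ was the only combination involving $h$, and $l$ is fixed by $F$.

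Finally, the first integral, integrating factor and invariant curves follow by pulling back Proposition \ref{p:p1} through $w=F$, $u=w_\tau=F_xy/G$. Then $(u-1)/(2\alpha w)=(F_xy-G)/(2\alpha FG)$, which turns $H_1,H_2,H_3$ into $H_7,H_8,H_9$. Under the time rescaling the cofactors are multiplied by $G$, giving $\lambda_7$, $\lambda_8$ and $\lambda_9$. Since $\lambda_1=\lambda_2$ becomes $\lambda_7=\lambda_8$, Proposition \ref{p:p5}(I) shows that $I=H_7/H_8$ is a first integral. For $M$ I use the standard transformation rule for integrating factors under \eqref{eq:GS}. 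The Jacobian of $(x,y)\mapsto(w,u)$ is $F_x^2/G$, and the time factor $G^{-1}$ combines with it to give the prefactor $(F_x/G)^2$. I will confirm this by checking $\sum q_j\lambda_j=-\mathrm{div}$ directly with Proposition \ref{p:p5}(II), including the cofactor of $(F_x/G)^2$.
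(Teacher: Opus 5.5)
Your proposal is correct and follows essentially the paper's route. You substitute \eqref{eq:GS}, read off $f=G/F$, eliminate $G$ to get $l=4\alpha^{3}f^{3}F^{3}$ and $2fgF_{x}=Fl$ (the paper's \eqref{eq:p3_3}), obtain \eqref{eq:criterion_1} as the compatibility condition, and transport the invariant curves, first integral and integrating factor of Proposition \ref{p:p1} through the transformation. You are in fact more explicit than the paper at three points. Your coefficient $h=F_{xx}/F_{x}-G_{x}/G-F_{x}/F$ is the correct one, whereas the printed expression in \eqref{eq:p3_2} contains a misprint. Your backward check of the $h$-relation does go through, because $h$ enters $l$ linearly with coefficient $fg\neq0$. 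Finally, you derive the prefactor $(F_{x}/G)^{2}$ from the Jacobian $F_{x}^{2}/G$ and the time factor, rather than quoting the external transfer result the paper relies on.
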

\begin{proof}
In order to obtain the necessary conditions, we substitute \eqref{eq:GS} into \eqref{eq:Ince_XV}. As a result, we arrive at \eqref{eq:Lienard} with
\begin{equation}\label{eq:p3_2}
h=\frac{F(F_{xx}-G_{x}F_{x})+GF_{x}^{2}}{GFF_{x}}, \quad f=\frac{G}{F}, \quad g=\frac{2\alpha^{3}F^{2}G^{2}}{F_{x}}.
\end{equation}
Therefore, equations which are equivalent to \eqref{eq:Ince_XV} via \eqref{eq:GS} are necessary of the from \eqref{eq:Lienard}. On the other hand, if for given values of $f$, $g$ and $h$ system \eqref{eq:p1_2} has a solution such that $GF_{x}\neq0$, then the corresponding equation from \eqref{eq:Lienard} is equivalent to \eqref{eq:Ince_XV}. Thus, if we find all $f$, $g$ and $h$, such that system \eqref{eq:p3_2} has a solution satisfying $GF_{x}\neq0$, we obtain the equivalence criterion between \eqref{eq:Ince_XV} and \eqref{eq:Lienard}.

This can be done as follows. First, we exclude $G$ from \eqref{eq:p1_2} using its second equation. As a consequence and after some simplifications, we get
\begin{equation}\label{eq:p3_3}
2gf F_{x}-Fl=0, \quad 4\alpha^{3}f^{3}F^{3}-l=0.
\end{equation}
Differentiating the second equation of \eqref{eq:p3_3} and substituting the result into \eqref{eq:p3_3} we find equivalence criterion \eqref{eq:criterion_1}. The expression for $F^{3}$ follows from the second equation of \eqref{eq:p3_3}. Further differentiation does not lead to any new compatibility conditions.

In order to find the expressions \eqref{eq:integral_1}, \eqref{eq:M_1} and \eqref{eq:Ic_1} we use Proposition 1 from \cite{Sinelshchikov2023a} and first integral \eqref{eq:Ince_XV_fi}, integrating factor \eqref{eq:Ince_XV_M} and invariant curves \eqref{eq:Ince_XV_IC} for \eqref{eq:Ince_XV}. This completes the proof.
\end{proof}

Now we present the equivalence class of \eqref{eq:Ince_V} with respect to \eqref{eq:GS}.

\begin{proposition}
\label{p:p4}
Suppose that $f\neq0$ and $g\neq0$. Then an equation from \eqref{eq:Lienard} is equivalent to \eqref{eq:Ince_V} via \eqref{eq:GS} if
\begin{equation}\label{eq:criterion_2}
\begin{gathered}
 \left( 4  l^{2}-2  lp+m \right) f^{6} \left[  \left( \mu+1\right) f^{6}+\mu   \left( 32  \mu+27 \right) l^{2}-2   \mu ^{2}\left( 2  lp-m \right)  \right] -4  l^{3} \left( \mu+1 \right) f^{9}-\\
 12  l\mu   \left[ \left( 6  l^{2}-2  lp+m \right)  \left( 2  lp-m \right) \mu+9  l^{4} \right] f^{3}+ \mu ^{3} \left(m -2  lp \right) ^{3}=0,
 \end{gathered}
\end{equation}
where $l$ is defined in \eqref{eq:l_def} and
\begin{equation}\label{eq:criterion_2_p}
\begin{gathered}
p=l+gf_{x}, \quad m=fgl_{x}-pl, \quad \mu=4\beta^{2}-1.
 \end{gathered}
\end{equation}
Any equation from \eqref{eq:Lienard} whose coefficients satisfy \eqref{eq:criterion_2} possesses a first integral
\begin{equation}\label{eq:integral_2}
  I=H_{11}H_{12}^{-1},
\end{equation}
an integrating factor
\begin{equation}\label{eq:M_2}
  M=\left(\frac{F_{x}}{G}\right)^{2}\left(H_{10}H_{11}H_{12}\right)^{-1},
\end{equation}
where $H_{10}$, $H_{11}$ and $H_{12}$ are invariant curves of equations from \eqref{eq:Lienard} defined by \eqref{eq:criterion_2}, which have the form
\begin{equation}\label{eq:Ic_2}
\begin{gathered}
H_{10}=\frac{F_{x}}{G}y+F^{2}-\beta^{2},\quad \lambda_{10}=G\\
H_{11}=(F-\beta)I_{2\beta}(2\sqrt{H_{10}})-\sqrt{H_{10}}I_{2\beta+1}(2\sqrt{H_{10}}), \quad \lambda_{11}=-FG,\\
H_{12}=(F-\beta)K_{2\beta}(2\sqrt{H_{10}})+\sqrt{H_{10}}K_{2\beta+1}(2\sqrt{H_{10}}), \quad \lambda_{11}=-FG.
\end{gathered}
\end{equation}

The functions that generate transformations \eqref{eq:GS} are given by correlations
\begin{equation}\label{eq:transformations_2a}
G={\frac { \left[\mu   \left(2  lp -4  l^{2}-m \right) f^{3} -2  l \left( \mu+1 \right) f^{6}-2  \mu l  \left( 4  \mu l  p+9  l^{2}-2  \mu m \right)  \right] f}{ \left( \mu+1 \right) f^{9}-\mu   \left(2  \mu l  p-16  \mu l^{2}-15  l^{2}-\mu m \right) f^{3}-3 \mu^{2}  l \left( 2  lp-m \right) }},
\end{equation}
\begin{equation}\label{eq:transformations_2b}
F=\frac{f+G}{2G}.
\end{equation}
\end{proposition}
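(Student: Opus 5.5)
The plan follows the proof of Proposition \ref{p:p3}. First substitute \eqref{eq:GS} into \eqref{eq:Ince_V}. Using $w_{\tau}=F_{x}x_{t}/G$ and $w_{\tau\tau}=\bigl(F_{xx}x_{t}^{2}+F_{x}x_{tt}\bigr)/G^{2}-F_{x}G_{x}x_{t}^{2}/G^{3}$, we obtain an equation of the form \eqref{eq:Lienard} with
\begin{equation*}
h=\frac{F_{xx}}{F_{x}}-\frac{G_{x}}{G},\qquad f=(2F-1)G,\qquad g=\frac{G^{2}(\beta^{2}-F^{2})}{F_{x}}.
\end{equation*}
Hence an equation from \eqref{eq:Lienard} is equivalent to \eqref{eq:Ince_V} exactly when this overdetermined system in $F,G$ has a solution with $GF_{x}\neq0$.

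Since only sufficiency is claimed, we do not need to derive every compatibility condition. It is enough to show that if \eqref{eq:criterion_2} holds, then $F$ and $G$ defined by \eqref{eq:transformations_2a}, \eqref{eq:transformations_2b} satisfy the three relations.

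\begin{enumerate}
\item The relation $f=(2F-1)G$ is exactly \eqref{eq:transformations_2b}, so it holds by construction.
\item Eliminate $h$ by forming $l=hfg+fg_{x}-gf_{x}$ from \eqref{eq:l_def}. This expresses $l$ through $F,G$ and $F_{x}$ alone; the second derivatives cancel, as in the Ince XV case. Likewise $p$ and $m$ from \eqref{eq:criterion_2_p} are expressed through $F$, $G$, $F_{x}$ and the relation for $g$.
\item Use $F=(f+G)/(2G)$ to eliminate $F$. This gives two algebraic relations between $G$, $f$, $l$, $p$, $m$: one coming from $l$, and one coming from $l_{x}$ (equivalently $m$), after substituting $F_{x}$ from the $g$-equation, i.e. $F_{x}=G^{2}(\beta^{2}-F^{2})/g$.
\item Both relations are polynomial in $G$. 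Their resultant with respect to $G$ should reproduce \eqref{eq:criterion_2}, with the parameter $\mu=4\beta^{2}-1$ entering through $\beta^{2}-F^{2}$. The linear remainder in the Euclidean algorithm should give \eqref{eq:transformations_2a}.
\end{enumerate}

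Consequently, if \eqref{eq:criterion_2} holds, the common root $G$ is given rationally by \eqref{eq:transformations_2a}, and all three relations are satisfied. The main obstacle is this elimination step: it is heavy computer algebra. I would organize it by working with the scale-invariant combinations $l/f^{3}$, $p/f^{3}$, $m/f^{6}$, which should reduce it to a resultant of a quadratic and a cubic in $G$. I would also have to track the nondegeneracy assumptions $GF_{x}\neq0$ and a nonvanishing denominator in \eqref{eq:transformations_2a}, since these restrict the class where the argument applies.

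Finally, the first integral, integrating factor and invariant curves are transported from Proposition \ref{p:p2}, using the fact (Proposition 1 of \cite{Sinelshchikov2023a}) that \eqref{eq:GS} maps invariant curves to invariant curves and first integrals to first integrals. Substituting $w=F$ and $u=F_{x}y/G$ into $H_{4},H_{5},H_{6}$ gives $H_{10},H_{11},H_{12}$. The cofactors pick up the factor $G$ from $d\tau=G\,dt$, giving $\lambda_{10}=G$ and $\lambda_{11}=\lambda_{12}=-FG$. The integrating factor transforms with the Jacobian factor $(F_{x}/G)^{2}$ coming from $dw\,du\,d\tau$. A direct check against Proposition \ref{p:p5} confirms both \eqref{eq:integral_2} and \eqref{eq:M_2}.
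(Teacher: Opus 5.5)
Your proposal is correct and follows essentially the paper's route: eliminating $F$ through $f=(2F-1)G$ turns the $l$-relation into the cubic $\mu G^{3}+f^{2}G+2l=0$. Differentiating this cubic and using the $g$-equation gives the quadratic \eqref{eq:p4_4}; the Euclidean remainder then yields \eqref{eq:transformations_2a}, the remaining resultant condition is \eqref{eq:criterion_2}, and the invariants are transported by Proposition 1 of \cite{Sinelshchikov2023a}.

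Two small remarks. Your sign $g=G^{2}(\beta^{2}-F^{2})/F_{x}$ is the right one; Example 2 and the paper's own cubic confirm it, so \eqref{eq:p4_2} has a sign slip. Also, your quadratic--cubic resultant step presupposes $\mu\neq0$, so you should treat $\mu=0$ separately, as the paper does, by checking that \eqref{eq:criterion_2} and \eqref{eq:transformations_2a} reduce there to $gf^{3}(fl_{x}-3lf_{x})+l^{2}(f^{3}-4l)=0$ and $G=-2l/f^{2}$.
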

\begin{proof}
The beginning of the proof is similar to those of Proposition \ref{p:p3}. After applying \eqref{eq:GS} to \eqref{eq:Ince_V} we obtain equation \eqref{eq:Lienard}, whose coefficients are given by
\begin{equation}\label{eq:p4_2}
h=\frac{GF_{xx}-G_{x}F_{x}}{GF_{x}}, \quad f=(2F-1)G, \quad g=\frac{G^{2}(F^{2}-\beta^{2})}{F_{x}}.
\end{equation}
Consequently, any equation that is equivalent to \eqref{eq:Ince_XV} via \eqref{eq:GS} is of the form \eqref{eq:Lienard}. Conversely, if for a given functions $f$, $g$ and $h$ there is a solution of \eqref{eq:p4_2} for $F$ and $G$ such that $GF_{x}\neq0$, then the corresponding equation from \eqref{eq:Lienard} is equivalent to \eqref{eq:Ince_V}.

Now we need to find integrability conditions for \eqref{eq:p4_2} as an overdetermined system of equations for the functions $F$ and $G$. Since $G\neq0$ we exclude the function $F$ from the system \eqref{eq:p4_2} and, after some simplifications and with the help of \eqref{eq:l_def} and denoting $\mu=4\beta^{2}-1$, we obtain
\begin{equation}\label{eq:p4_3}
\mu G^{3}+ f^{2}G+2l=0, \quad 2\mu  f (g G_{x}- fG^{2})+(2f^{3}-2\mu g f_{x}-2\mu l)G+4fl=0.
\end{equation}
Suppose that $\mu\neq0$. Then differentiating the first equation from \eqref{eq:p4_3} and employing the second one we obtain the expression
\begin{equation}\label{eq:p4_4}
2 \mu f \left( f^{3}+3  l \right)  G^{2}+2   \left( 4 \mu  f^{2}l -f^{5}\right) G +2   \left(3  glf_{x}+3  {l}^{2} -fgl_{x}\right) \mu-4  f^{3}l=0.
\end{equation}
With the help of the first equation from \eqref{eq:p4_3} and \eqref{eq:p4_4} we find expression \eqref{eq:transformations_2a} for the function $G$. Notice that the case $f^{3}+3  l$ leads to $F_{x}=0$ and, hence, is not
considered.

Then we substitute \eqref{eq:transformations_2a} into the first equation from \eqref{eq:p4_2}, which splits into two cases. The first one is an algebraic correlation on $f$, $l$ and $\mu$. If we proceed further with this correlation, we arrive at an algebraic equation for the parameter $\mu$. Since we suppose that $\mu$ is an arbitrary we do not take this case into consideration.

The second case is exactly correlation \eqref{eq:criterion_2}. To check that there are no new compatibility conditions appear we differentiate \eqref{eq:p4_4} two times. The second derivative of \eqref{eq:p4_4} vanishes if we take into account the first one. Furthermore, the first derivative of \eqref{eq:p4_4} vanishes on \eqref{eq:criterion_2}. If we substitute \eqref{eq:transformations_2a} into the second equation from \eqref{eq:p4_2} the result also vanishes on \eqref{eq:criterion_2}.

Finally, we proceed with the case of $\mu=0$. Assuming that $\mu=0$ in \eqref{eq:p4_3} we obtain the expression for $G$ and the compatibility condition
\begin{equation}\label{eq:p4_5}
G=-\frac{2l}{f^{2}}, \quad gf^3(fl_{x}-3lf_{x})+l^2(f^3-4l)=0.
\end{equation}
However these two correlations immediately follow form \eqref{eq:criterion_2} and \eqref{eq:transformations_2a} at $\mu=0$ if one takes into account notations \eqref{eq:criterion_2_p}. Consequently, it is not necessary to consider the case of $\mu=0$ separately.

Finally, we apply Proposition 1 from \cite{Sinelshchikov2023a} to \eqref{eq:Ince_V_IC}, \eqref{eq:Ince_V_FI} and \eqref{eq:Ince_V_M} to obtain the expressions for first integral, integrating factor and invariant curves given in \eqref{eq:integral_2}, \eqref{eq:M_2} and \eqref{eq:Ic_2}. This completes the proof.
\end{proof}

\section{Examples}
\label{s:ex}

In this section we consider applications of Propositions \ref{p:p3} and \ref{p:p4} for finding integrable differential equations and dynamical systems. We provide two examples of non-Liouvillian integrable dynamical systems: one is a cubic Kolmogorov system and the other is a Li\'enard oscillator with arbitrary degree.


\textbf{Example 1.}
Suppose we have the following cubic system of equations
\begin{equation}\label{eq:ex1_1}
x_{t}=x(a_{0}-z-a_{2}x-a_{3}x^{2}), \quad z_{t}=z(b_{0}-b_{1}z-b_{2}x-b_{3}x^{2})+c_{3}x^{3}.
\end{equation}
Here $a_{0}$, $a_{2}$, $a_{3}$, $b_{i}$, $i=\overline{0,3}$ and $c_{3}$ are arbitrary parameters. System \eqref{eq:ex1_1} is a generalization of the Kolmogorov system ($c_{3}=0$) and the competitive Lotka--Volterra system ($c_{3}=a_{3}=b_{3}=0$) (see, e.g. \cite{Hernandez1997,Gine2009,Libre2022,Sinelshchikov2023a}).

\begin{center}
\setlength{\extrarowheight}{5pt}
\begin{table}[!ht]
\centering
  \begin{tabular}{|c|l|}
  \hline
   & Parameters of \eqref{eq:ex1_1}\\ \hline
  1 & $ b_{3}=a_{3}=0, \quad b_{0}=a_{0}, \quad  b_{1}=1, \quad  b_{2}=a_{2} \quad a_{0}c_{3}\neq0$  \\ \hline
 3 & $a_{0}=a_{3}=b_{0}=b_{3}=0, \quad  b_{1}=4/3, \quad b_{2}=4a_{2}/3, \quad a_{2}c_{3}\neq0$  \\ \hline
  6 & $a_{0}=b_{0}=0, \quad  b_{1}=4/3, \quad b_{2}=5a_{2}/3, \quad a_{3}=3b_{3}/4, \quad c_{3}=-a_{2}b_{3}/4, \quad a_{2}b_{3}\neq0$  \\ \hline
 8 & $a_{0}=a_{2}=b_{0}=b_{2}=0, \quad b_{1}=5/3, \quad b_{3}=5a_{3}/3, \quad c_{3}a_{3}\neq0$  \\ \hline
\end{tabular}
\caption{Integrable cases of system \eqref{eq:ex1_1}.}
\label{tab:tab1}
\end{table}
\end{center}

If we exclude $z$ from \eqref{eq:ex1_1}, we obtain the following equation from \eqref{eq:Lienard}
\begin{equation}\label{eq:ex1_2}
\begin{gathered}
x_{tt}-\frac{b_{1}+1}{x}x_{t}^{2}-\left[(2a_{3}+b_{3}-2a_{3}b_{1})x^2+(a_{2}+b_{2}-2a_{2}b_{1})x+2b_{1}a_{0}-b_{0}\right]x_{t} -\\
-a_{3}(a_{3}b_{1}-b_{3})x^5+(c_{3}-2a_{2}a_{3}b_{1}+a_{2}b_{3}+a_{3}b_{2})x^4+\left(2a_{0}a_{3}b_{1}-a_{2}^2b_{1}-a_{0}b_{3}+\right. \\ \left. a_{2}b_{2}-a_{3}b_{0}\right)x^3+(2a_{0}a_{2}b_{1}-a_{0}b_{2}-a_{2}b_{0})x^2-a_{0}(a_{0}b_{1}-b_{0})x=0.
\end{gathered}
\end{equation}
Now we apply criterion \eqref{eq:criterion_1} to \eqref{eq:ex1_1} and find that the coefficients of \eqref{eq:ex1_2} satisfy condition \eqref{eq:criterion_1} in eight cases. However, each pair of cases $\{1,2\}$, $\{3,4\}$, $\{5,6\}$ and $\{7,8\}$  correspond to the same equation from \eqref{eq:ex1_2} up to the transformation $a_{0}\rightarrow -b_{0}$, $a_{2}\rightarrow -3b_{2}$, $b_{3}\rightarrow b_{3}/2$ and $b_{3}\rightarrow a_{3}/2$, respectively. What is more, the dynamical systems that are represented by each pair are connected either by linear or quadratic transformation of the dependent variable $z$. Therefore, in Table \ref{tab:tab1} we present only four integrable cases of \eqref{eq:ex1_1} that are distinct. Transformations that connect \eqref{eq:ex1_2} to \eqref{eq:Ince_XV} are given in Table \ref{tab:tab2}.

\begin{center}
\setlength{\extrarowheight}{5pt}
\begin{table}[!ht]
\centering
  \begin{tabular}{|c|l|}
  \hline
   & Transformations \eqref{eq:GS} and parameters of \eqref{eq:Ince_XV}\\ \hline
  1 & $F=x, \quad  G=a_{0}x, \quad \alpha=\left[c_{3}/(2a_{0}^{2})\right]^{\frac{1}{3}}$  \\ \hline
  3 & $F=x^{1/3}, \quad  G=-(a_{2}x^{4/3})/3, \quad \alpha=\left[(3c_{3})/(2a_{2}^{2})\right]^{\frac{1}{3}}$   \\ \hline
  6 & $F=x^{-2/3}, \quad  G=(b_{3}x^{4/3})/2, \quad \alpha=-\left[(2a_{2})/(3b_{3})\right]^{\frac{2}{3}}$   \\ \hline
  8 & $F=x^{-1/3}, \quad  G=(a_{3}x^{5/3})/3, \quad \alpha=-\left[3c_{3}/(2a_{3}^{2})\right]^{\frac{1}{3}}$ \\ \hline
\end{tabular}
\caption{Transformations \eqref{eq:GS} that connect integrable case of \eqref{eq:ex1_2} with \eqref{eq:Ince_XV}.}
\label{tab:tab2}
\end{table}
\end{center}

For instance, the dynamical system from the case 1 has the form
\begin{equation}\label{eq:ex1_3}
x_{t}=x(a_{0}-z-a_{2}x), \quad z_{t}=z(a_{0}-z-a_{2}x)+c_{3}x^{3},
\end{equation}
With the help of the formulas \eqref{eq:Ic_1}, \eqref{eq:integral_1} and \eqref{eq:M_1} we find its invariant curves
\begin{equation}\label{eq:ex1_4}
\begin{gathered}
H_{13}= (a_{2}x+z)\mbox{Ai}\{ V\}-(4a_{0}c_{3})^{\frac{1}{3}}x\mbox{Ai}^{'}\{V\}, \quad \lambda_{13}=a_{0}-\frac{3}{2}(z+a_{2}x),\\
H_{14}=(a_{2}x+z)\mbox{Bi}\{ V\}-(4a_{0}c_{3})^{\frac{1}{3}}x\mbox{Bi}^{'}\{V\}, \quad \lambda_{14}=a_{0}-\frac{3}{2}(z+a_{2}x),\\
H_{15}=x, \quad \lambda_{15}=a_{0}-(z+a_{2}x),
  \end{gathered}
\end{equation}
and first integral and integrating factor
\begin{equation}\label{eq:ex1_5}
\begin{gathered}
I=H_{13}/H_{14}, \quad M=(H_{13}H_{14})^{-1}.
  \end{gathered}
\end{equation}
In the formulas above we use the notation
\begin{equation}\label{eq:ex1_6}
V=\frac{2c_{3}x^{3}+(z+a_{2}x)^{2}}{(4a_{0}c_{3})^{\frac{2}{3}}x^{2}}.
\end{equation}

One can construct invariant curves, firs integrals and integrating factors for the dynamical systems corresponding others cases from Table \ref{tab:tab1}. Since transformations \eqref{eq:GS} with the functions given in Table \ref{tab:tab2} preserve Liouvillian (non)integrability we conclude that dynamical systems from \eqref{eq:ex1_1} with the parameters given in Table \ref{tab:tab1} are not Liouvillian integrable, but have a non-Liouvillain first integrals.

\textbf{Example 2.}
Consider the following family of Lienard equations with degrees of $f$ and $g$ that are $(2n-1,3n-1)$
\begin{equation}\label{eq:ex2_1}
x_{tt}+\left(a_{2}x^{2n-1}+a_{1}x^{n-1}+a_{0}\right)x_{t}+b_{3n-1}x^{3n-1}+\ldots b_{0}=0,
\end{equation}
where $a_{2}b_{3n-1}\neq0$, $a_{1}$, $a_{0}$, $b_{i}$, $j=0,\ldots,2n-2$ are arbitrary parameters and $n\in\mathbb{N}\setminus \{1\}$. Notice that we exclude the case of $n=1$ from the consideration because at $n=1$ \eqref{eq:ex2_1} is essentially \eqref{eq:Ince_V}.

One can show that  coefficients of \eqref{eq:ex2_1} satisfy criterion \eqref{eq:criterion_2} if
\begin{equation}\label{eq:ex2_2}
\begin{gathered}
a_{2}=\frac{n}{2},\quad  a_{1}=\sigma n, \quad a_{0}=0, \quad b_{3n-1}=-\frac{n}{8},\\
b_{2n-1}=-\frac{n}{8}(2\sigma+1), \quad b_{n-1}=\frac{n\beta^{2}}{2}-\frac{n(2\sigma+1)^2}{8}, \\
 b_{j}=0,\, j\neq3n-1,2n-1,n-1,
\end{gathered}
\end{equation}
where $\sigma$ and $\beta$ are arbitrary parameters.

The functions $F$ and $G$ that generate equivalence transformations are
\begin{equation}\label{eq:ex2_2a}
F=\frac{1}{2}\left(x^{n}+2\sigma+1\right), \quad G=\frac{n}{2}x^{n-1}.
\end{equation}

Consequently, we obtain that the following Li\'enard equation
\begin{equation}\label{eq:ex2_3}
x_{tt}+\frac{n}{2}x^{n-1}(x^{n}+2\sigma)x_{t}+\frac{n}{8}x^{n-1}\left(4\beta^{2}-(x^{n}+2\sigma+1)^{2}\right)=0,
\end{equation}
has invariant curves
\begin{equation}\label{eq:ex2_4}
\begin{gathered}
H_{15}=4y+(x^{n}+2\sigma+1)^{2}-4\beta^{2}, \quad \lambda_{15}=\frac{n}{2}x^{n-1},\\
H_{16}=\sqrt{H_{15}}I_{2\beta+1}(\sqrt{H_{15}})-(x^{n}+2\sigma -2\beta+1)I_{2\beta}(\sqrt{H_{15}}),  \\
\lambda_{16}=-\frac{n}{4}x^{n-1}(x^{n}+2\sigma+1),\\
H_{17}=\sqrt{H_{15}}K_{2\beta+1}(\sqrt{H_{15}})+(x^{n}+2\sigma -2\beta+1)K_{2\beta}(\sqrt{H_{15}}),\\
\lambda_{17}=-\frac{n}{4}x^{n-1}(x^{n}+2\sigma+1).
\end{gathered}
\end{equation}
Equation \eqref{eq:ex2_3} also possesses first integral and integrating factor
\begin{equation}\label{eq:ex2_5}
 I=H_{16}H_{17}^{-1}, \quad M=(H_{15}H_{16}H_{17})^{-1}.
\end{equation}

\begin{figure}
  \centering
  \includegraphics[width=0.5\textwidth]{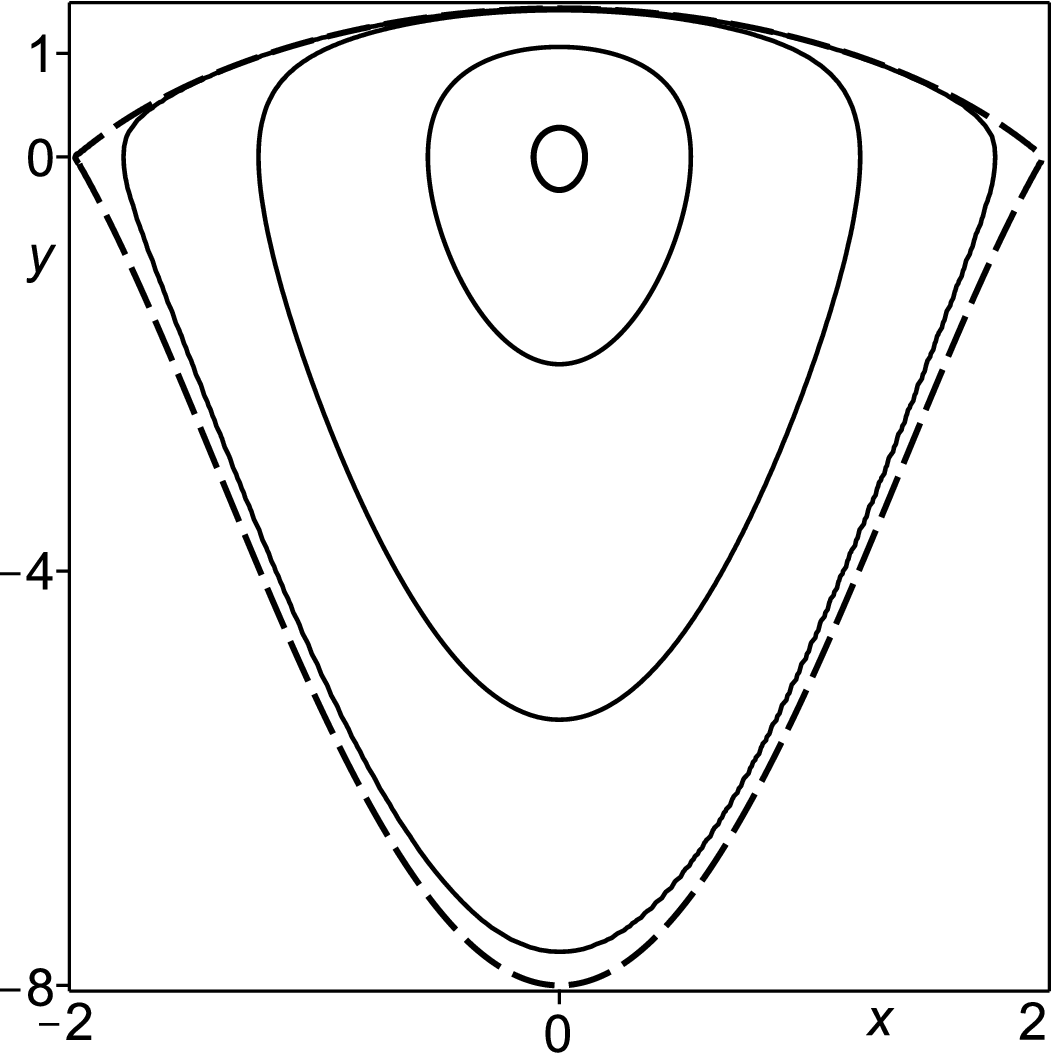}
  \caption{Level lines of first integral \eqref{eq:ex2_7a} (solid lines) and invariant curve \eqref{eq:ex2_7b} (dashed line) of \eqref{eq:ex2_6} at $\beta=1$, $\sigma=5/2$.}
  \label{fig:f1}
\end{figure}

Due to the fact that transformations \eqref{eq:GS} with \eqref{eq:ex2_2a} preserve Liouvillian (non)integrability we see that family of Li\'enard equations \eqref{eq:ex2_3} is integrable with non-Liouvillian first integrals and non-Puiseux integrating factor given in \eqref{eq:ex2_5}.

Let us consider a particular case of \eqref{eq:ex2_3} that corresponds to $n=2$. Then, from \eqref{eq:ex2_3} after applying scaling transformations $t=it^{'}$, $x=ix^{'}$ we obtain (the primes are omitted)
\begin{equation}\label{eq:ex2_6}
x_{tt}+(x^{2}+2\sigma)xx_{t}+\frac{x}{4}(4\beta^{2}-[x^{2}+2\sigma+1]^{2})=0,
\end{equation}
The first integral of \eqref{eq:ex2_6}, along with the corresponding invariant curves, can be obtained from \eqref{eq:ex2_5} by setting $x\rightarrow ix$. As a consequence, we get
\begin{gather}
 I=\widetilde{H}_{16}\widetilde{H}_{17}^{-1}, \label{eq:ex2_7a}\\
 \widetilde{H}_{15}=4y+(2\sigma+1-x^{2})^{2}-4\beta^{2} \label{eq:ex2_7d},\\
 \widetilde{H}_{16}=\sqrt{\widetilde{H}_{15}}I_{2\beta+1}\left(\sqrt{\widetilde{H}_{15}}\right)-(2\sigma-2\beta+1-x^{2})I_{2\beta}\left(\sqrt{\widetilde{H}_{15}}\right), \label{eq:ex2_7b}\\
 \widetilde{H}_{17}= \sqrt{\widetilde{H}_{15}}K_{2\beta+1}\left(\sqrt{\widetilde{H}_{15}}\right)+(2\sigma-2\beta+1-x^{2})K_{2\beta}\left(\sqrt{\widetilde{H}_{15}}\right).  \label{eq:ex2_7c}
\end{gather}
The dynamical system associated to \eqref{eq:ex2_6} has a center at the origin if $4\beta^{2}-(2\sigma+1)^{2}<0$. We demonstrate level curves $I=C_{1}$ of first integral \eqref{eq:ex2_7a} at $\beta=1$, $\sigma=5/2$ for several values of $C_{1}$ for $x\in(-2,2)$ (solid lines in Fig. \ref{fig:f1}). At these values of the parameters \eqref{eq:ex2_6} has a center at the origin, saddles at $(\pm 2,0)$ and unstable/stable nodes at $(\pm 2\sqrt{2},0)$. We also plot there the invariant curve \eqref{eq:ex2_7b} of \eqref{eq:ex2_6} (dashed line in Fig. \ref{fig:f1}). This invariant curve separates that regions of phase space with periodic dynamics and dynamics governed by saddles and nodes of \eqref{eq:ex2_6}.

\section{Conclusion}

In this work we have considered integrability for family of oscillators \eqref{eq:Lienard}. We have demonstrated that to construct non-Liouvillain and non-Puiseux first integrals and integrating factors one can use transcendental invariant curves. We have proposed an algorithm for finding and classifying such curves. Using this algorithm we have shown that two dynamical systems from the Painlev\'e--Gambier classification are not Liouvillian integrable and constructed their transcendental invariant curves and non-Liouvillian first integrals. It is also demonstrated that Puiseux integrabilty is not preserved by the algebraic transformations.
With the help of the proposed approach we have found an example of globally non-Puiseux integrable dynamical system that cannot be mapped into a Puiseux integrable one by algebraic transformations. In addition, we have constructed equivalence classes with respect to nonlocal transformations for Painlev\'e-type equations considered in this work. We have shown that there are interesting examples of non-Liouvilliant integrable dynamical systems among these equivalence classes. The algorithm proposed in this work can be applied not only to equations from family \eqref{eq:Lienard} but to arbitrary two-dimensional dynamical systems of the from \eqref{eq:gen_sys}. It is also interesting to consider extensions of this algorithm to higher dimensional dynamical systems. Finally, it is interesting to observe that when algorithms that use differential Galois theory are applied to study the (non)integrability of three-dimensional dynamical systems, the same linear equations for special functions (Bessel, hypergoemtric) appear as in this work, when we deal with transcendental invariant curves (see, e.g. \cite{Szuminski2020}). Moreover, singularities of solutions of these linear equations also play an important role in both approaches.

\section*{Acknowledgements.}
J.G. is partially supported by the Agencia Estatal de Investigac\'ion grant PID2020-113758GB-I00 and AGAUR grant number 2021SGR 01618. D.S. is partially supported by H2020-MSCA-COFUND-2020-101034228-WOLFRAM2. D.S. is also grateful to Ilia Gaiur for useful discussions.

\end{document}